\newcommand{\eat}[1]{} 
\newcommand{\favg}{\mbox {$f_{avg}$}}  
\newcommand{\cavg}{\mbox {$c_{avg}$}}  
\newcommand{\gell}{\mbox {$G_{\ell}$}}  
\newcommand{\im}{\mbox {\sc MaxSpread}}  
\newcommand{\mintss}{\mbox {\sc MinTss}}  
\newcommand{\calc}{\mbox {${\cal C}$}} 
\newcommand{\pioptk}{\mbox {$\pi_{opt, k}$}} 
\newcommand{\piopt}{\mbox {$\pi_{opt}$}} 
\newcommand{\SV}{\textcolor{black}}
\newcommand{\C}{\eat}
\begin{document}	
\title{ Adaptive Influence Maximization in Social Networks: Why Commit when You can Adapt? }
\numberofauthors{2} 
\author{
\alignauthor
Sharan Vaswani \\
       \affaddr{University of British Columbia}\\
       \affaddr{Vancouver, Canada}\\
       \email{sharanv@cs.ubc.ca}      
\alignauthor
Laks V.S. Lakshmanan \\
       \affaddr{University of British Columbia}\\
       \affaddr{Vancouver, Canada}\\
       \email{laks@cs.ubc.ca}
}

\maketitle

\newtheorem{theorem}{Theorem} 
\newtheorem{fact}{Fact} 
\newtheorem{example}{Example} 
\newtheorem{problem}{Problem} 
\newtheorem{prop}{Proposition} 

\begin{abstract}
Most previous work on influence maximization in social networks is limited to the non-adaptive setting in which the marketer is supposed to select all of the seed users, to give free samples or discounts to, up front. A disadvantage of this setting is that the marketer is forced to select all the seeds based solely on a diffusion model. If some of the selected seeds do not perform well, there is no opportunity to course-correct. A more practical setting is the adaptive setting in which the marketer initially selects a batch of users and observes how well seeding those users leads to a diffusion of product adoptions. Based on this market feedback, she formulates a policy for choosing the remaining seeds. In this paper, we study adaptive offline strategies for two problems: (a) \im\ -- given a budget on number of seeds and a time horizon, maximize the spread of influence and (b) \mintss\ -- given a time horizon and an expected number of target users to be influenced, minimize the number of seeds that will be required. In particular, we present theoretical bounds and empirical results for an adaptive strategy and quantify its practical benefit over the non-adaptive strategy. We evaluate adaptive and non-adaptive policies on three real data sets. We conclude that while benefit of going adaptive for the \im\ problem is modest,  adaptive policies lead to significant savings for the \mintss\ problem. 


\eat{Most of the previous literature on influence maximization in social networks is concerned with the non-adaptive setting in which the marketer is supposed to select all of the seeds initially. A more natural variant of this problem is the adaptive setting in which the marketer gives free products / discounts to some of the users initially, then checks how the market reaction and depending on the market feedback formulates a strategy to seed other nodes. In this paper, we study adaptive strategies for two problems - a) given a budget, find the nodes to be seeded so as to maximize the spread of influence in the network and b) minimize the number of initial seeds to influence a certain number of users in the network. 
In particular, we present theoretical bounds and empirical results for adaptive seeding and quantify its benefit over the non-adaptive strategy. Given a fixed time horizon, we find offline adaptive policies for the two problems described above. } 
\end{abstract}

\section{Introduction}
\label{sec:Introduction}
\label{sec:intro} 
Recently, there has been tremendous interest in the study of influence propagation in social and information networks, motivated by applications such as the study of spread of infections and innovations, viral marketing, and feed ranking to name a few (e.g., see \cite{hethcote2000mathematics,domingos2001mining,samper2008nectarss,song2007information}). A prototypical problem that has received much attention in the literature is  \emph{influence maximization} (\im): given a directed network $G$, with edge weights denoting probabilities of influence between nodes, find $k$ nodes, such that activating them in the beginning leads to the maximum  \emph{expected spread}, i.e., expected number of activated nodes as according to a given diffusion model. For the viral marketing application, nodes may model users, activation may correspond to product adoption, and seed users are given free or price discounted samples of the product, with the aim of achieving the maximum expected number of product adoptions. 

Kempe, Kleinberg and Tard\"{o}s \cite{kempe2003maximizing} formalized this as a discrete optimization problem  and  studied several discrete-time diffusion models 
including independent cascade and linear threshold (details in Section~\ref{sec:Related-Work}). They showed that \im\ under these models is NP-hard but the expected spread function satisfies the nice properties of monotonicity and submodularity. Exploiting these properties, they showed that a simple greedy algorithm, which repeatedly adds the seed with the largest marginal gain, i.e., increase in expected spread, until the budget $k$ is reached, achieves a $(1-1/e)$-approximation to the optimum. There has been an explosion of research activity around this problem, including development of scalable heuristics, alternative diffusion models, and scalable approximation algorithms (e.g., see \cite{chen2009efficient} \cite{wang2012scalable} \cite{leskovec2007cost} \cite{goyal2011simpath} \cite{goyal2011data} \cite{Tang2014Influence}). \SV{For space limitations, we refer the reader to \cite{chen2013information} for a more detailed survey.}

\C{Monotonicity means adding more seeds cannot lead to a lower expected spread whereas submodularity intuitively corresponds to the property of diminishing returns (formal definitions in Section~\ref{sec:Related-Work}). By exploiting these properties and early results by Nemhauser et al.\cite{nemhauser1978analysis}, they showed that a simple greedy algorithm which repeatedly selects the node that results in the maximum marginal gain in expected spread provides a $(1-1/e)$-approximation to the optimal solution.} 

The majority of the work in influence maximization has confined itself to a \emph{non-adaptive} setting where, in viral marketing terms, the marketer must commit to choosing all the $k$ seeds up front. This means that the choice of every single seed is driven completely by the diffusion model used for capturing the propagation phenomena. In practice, it may happen that the \emph{actual spread} resulting from the seeds chosen may fall short of the expected spread predicted by the diffusion model. Recent work by Goyal et al. \cite{goyal2011data} shows that most diffusion models tend to over-predict the actual spread. Thus, committing to the choice of all $k$ seeds in one shot can result in a sub-optimal performance in actuality. A more realistic setting is one where the marketer chooses a subset of seeds and activates them. She monitors how their activation spreads through the network and observes the actual spread thus far. She can then take into account this market feedback in making subsequent seed selections. We call this setting an \emph{adaptive} setting, as choices of subsequent seeds are adapted to observations made so far about the actual spread achieved by previous selections. Hence, the adaptive setting introduces a \emph{policy} $\pi$ which specifies which node(s) to seed at a  given time. It is very intuitive that adaptive seed selection should lead to a higher actual spread compared to non-adaptive seed selection, since it benefits from market feedback and tailors seed selections accordingly. 

\C{As explained, the network is represented as a probabilistic graph with the edge weights denoting the probabilities that a user will influence its neighbour.} 
\C{In the IC model, each active user gets one chance to influence her  neighbour. We refer to this as an activation attempt. An activation attempt succeeds with the corresponding edge probability. An edge along which an activation attempt succeeded  is said to be ``live'' whereas the other edges are said to be ``dead''. Since there are $|E|$ edges in the network and each one of them can be live or dead, there are $2^{|E|}$ possibilities. These are referred to as possible worlds or possible realizations of the network. In the real world, the status of all these edges will be uniquely determined resulting in a deterministic network. Such a deterministic network which gives information about the reality (how the information diffused from a set of initially active nodes). This deterministic network is one among the set of possible worlds. Since it reflects the reality, we call it the ``true'' possible world.}
Adaptive seed selection raises several major challenges. For instance, in the adaptive setting, in practice, there is a finite time horizon $T$ within which the marketer wishes to conduct her viral marketing campaign. Suppose $k$ is  the seed budget of a marketer. The marketer must then consider the following questions. How many seeds to select at a given time, that is, what is the \emph{batch} size? Which nodes should be selected in each intervention ? How long should she wait between seeding successive batches (interventions)? If $T$ is sufficiently long, it seems intuitive that selecting one seed at a time and waiting until the diffusion completes, before choosing the next seed,  should lead to the maximum spread.
The reason is that we do not commit any more seeds than necessary to start a fresh diffusion and every seed selection takes full advantage of market feedback. 
We refer to the above case as \emph{unbounded time horizon}. Another natural question is, what if the time horizon $T$ is not long enough to allow many small batches to be chosen and/or diffusions to be observed in full. In this case, which we call \emph{bounded time horizon}, the marketer has to choose a strategy in which the budget $k$ is spent within the time horizon $T$ and every seed selection benefits from as much feedback as possible. 

Instead of maximizing the spread, the marketer may have a certain expected spread as the \emph{target} that she wants to achieve. This target may be derived from the desired sales volume for the product. A natural problem is to find the minimum number of seeds needed to achieve the target. This problem, called \emph{minimum targeted seed selection} (MINTSS for short), has been studied in the non-adaptive setting \cite{goyal2010approximation}, where it was shown that the classic greedy algorithm leads to a bi-criteria approximation to the optimal solution.\C{Precisely, if the target spread is $Q$ and a shortfall of $\beta > 0$ is allowed, then the greedy algorithm will achieve an expected spread $\ge (Q-\beta)$ using no more than $OPT (1+\ln\lceil Q/\beta \rceil)$ seeds, where $OPT$ is the optimal number of seeds.} An interesting question is whether an adaptive strategy for seed selection can significantly cut down on the number of seeds needed to reach a given target spread. 

\C{
Adaptive influence maximization has been proposed previously in~\cite{guillory2010interactive} and~\cite{golovin2011adaptive}. While ~\cite{guillory2010interactive} is more directed towards active learning, the authors propose an idea to maximize influence in a social network with hidden information. In~\cite{golovin2011adaptive}, the authors introduce the notion of ``adaptive submodularity'' and derive average case bounds for the performance of greedy adaptive policies for the sensor selection problem, the active learning setting, and viral marketing. Finally, ~\cite{chen2013near} addresses the MINTSS problem described above and shows that under certain conditions, the batch-greedy adaptive policy (in which the seeds are chosen in batches in a greedy manner) is competitive not only against the sequential greedy policy (choosing one seed at a time) but also against the optimal adaptive policy. A detailed review of related work appears in Section ~\ref{sec:Related-Work}.
}

Adaptive \im\ has been studied recently in \cite{guillory2010interactive,  golovin2011adaptive, guillory2010interactive}, and adaptive MINTSS has been studied in \cite{chen2013near}. While a more detailed comparison with these papers appears in Section~\ref{sec:Related-Work}, here are the key differences with our work. 
The market feedback model assumed by these papers is that when a node is activated (seeded), a subset of the out-edges from the node become active or ``live'' while others stay inactive or become ``dead''. This amounts to saying we get to observe which active nodes succeeded in activating which other nodes. We refer to this as ``edge level'' feedback. \emph{Edge level feedback assumption is unrealistic}, since in practice, we may only know which other nodes activated as a result of choosing certain seeds, rather than who succeeded in activating whom. 

The experiments conducted in these papers (if at all) are on small toy networks with $1000$ nodes and they do not clarify the {\sl practical} benefits of going adaptive for real large networks. All previous studies are confined to the setting of \emph{unbounded time horizon}, which means the horizon is long enough for the diffusion started by each batch to complete. In practice, the horizon may be bounded and not leave enough time for successive diffusions to complete. The theoretical results in these papers bound the performance of the greedy adaptive policy compared to the optimal adaptive policy. Notice that the optimal (adaptive) policy cannot be computed in polynomial time. The only practical options for both non-adaptive and adaptive settings are greedy approximations (possibly with techniques for scaling up to large datasets). Thus, a real question of practical interest is what do we gain by going adaptive, i.e., what is the gain in peformance of the greedy approximation algorithm when it is made adaptive? In contrast, we study\im\ and MINTSS under both unbounded and bounded time horizon and quantify the benefits of going adaptive with reference to the greedy approximation algorithm, as opposed to the optimal algorithm which is not practical. Furthermore, we propose a novel node level feedback model which we use for adaptive seed selection. Node level feedback is in line what is really observable in practice: {\sl which users became active as a result of seeding the last batch of users?}  

\eat{Also, all of the previous work addresses the problem in the unrealistic case of an unbounded time horizon. They derive relations between the greedy adaptive policy and the unimplementable optimal policy. To the best of our knowledge, we are the first to characterize the practical benefit of going adaptive for both of the problems described above by comparing the greedy non-adaptive and greedy adaptive policies. We conduct extensive experiments on large real world datasets and consider the problem of adaptive influence maximization for both unbounded and bounded time horizon. } 

In this paper, we address the aforementioned questions and make the following contributions. 
\begin{compactitem} 
\item We define the problems of \emph{adaptive influence maximization} under bounded or unbounded time horizon and \emph{minimum adaptive targeted seed selection} by generalizing their non-adaptive counterparts (Section~\ref{sec:Problem-Definition}). 

\item We propose a novel \emph{node level feedback} model for capturing market feedback for adaptive seed selection and show that as long as the time horizon is unbounded, i.e., long enough to allow diffusions to complete, the spread function for our node level feedback model is adaptive monotone and adaptive submodular (Section~\ref{sec:Theory}). 

\item We establish a bound on the spread achieved by a greedy adaptive strategy for seed selection compared to both an optimal adaptive strategy and a greedy non-adaptive strategy. The former shows that the greedy algorithm continues to provide a guaranteed approximation, while the latter formally establishes the benefits of practical adaptive strategies over practical non-adaptive ones (Section~\ref{sec:Theory}). 

\item We establish a similar bound on the number of seeds required by the greedy adaptive policy compared to the greedy non-adaptive one, in order to meet a given target expected spread and establish the practical advantage of going adaptive (Section~\ref{sec:Theory}).  

\item For the unbounded horizon, we scale up the classic greedy adaptive policy by leveraging the recent state-of-the-art non-adaptive randomized algorithm based on reverse reachable sets and adapt it to the adaptive setting to achieve superior performance (Section~\ref{sec:Algorithms}). 

\item \C{Since the greedy algorithm for adaptive seed selection for the unbounded horizon is sub-optimal when the horizon is bounded,} We argue that the expected spread function is computationally hard to optimize f	or the case of bounded time horizon and propose an alternative algorithm based on sequential model-based optimization (Section~\ref{sec:Algorithms}).  

\item We conduct a comprehensive set of experiments on 3 real datasets to measure the performance of our algorithms and their advantages over non-adaptive policies. We report our results (Section~\ref{sec:Experiments}).  
\end{compactitem}

Related work is discussed in Section~\ref{sec:Related-Work}. We summarize the paper and present directions for future work in Section~\ref{sec:Conclusion}. 

\eat{ 
Let G represent a social network with nodes representing users and the edges connections between them. Weights for edges correspond to influence probabilities which correspond to the influence of users on each other. Information spreads in the network according to a diffusion model. The diffusion models are inspired from sociology and propagation of virus propagation. The most commonly used discrete diffusion models (in which time unfolds in discrete steps) are the Independent Cascade (IC) and the Linear Threshold (LT) model. In this paper, we consider only the IC model of propagation. Most of our results can be generalized to the LT model trivially.In the IC model, each node becomes active either because it was seeded initially at time $t = 0$ or become active in the previous time-step because of an active neighbour. Each active node gets one chance to activate its inactive neighbours. The probability of a successful activation is proportional to the influence probability between the two users. An activation attempt by a user along an edge is probabilistic and each edge in the network can be in a binary state indicating whether an activation attempt will be successful or not, we obtain an exponentially large number of possible worlds. The `true' world which reflects the activation attempts in reality is a realization of the one of the possible worlds. 

Given a budget $k$ on the number of nodes that can be seeded initially, the objective of influence maximization is to find the $k$ nodes which when made active initially will maximize the expected number of nodes becoming activated under the given diffusion model. Since we do not know the true world beforehand, we maximize the expected number of active nodes where the expectation is across the exponentially many possible worlds. ~\cite{kempe2003maximizing} formulated the \im\ problem as a combinatorial optimization problem. They prove that under certain diffusion models the expected spread (number of active nodes at the end of the diffusion process) follows the property of submodularity of law of diminishing returns. They exploit this submodularity property and use the results in ~\cite{nemhauser1978analysis} to propose a greedy algorithm. The greedy algorithm at each stage calculates the marginal gains (the expected increase in spread because of seeding that particular node) for each of the nodes in the network and adds to the seed set the node with the highest marginal gain. Since calculating the marginal gain involves a computation of over a large number of possible worlds, we resort to Markov Chain Monte-Carlo (MCMC) simulations to give an estimate of the marginal gains. It can be shown that the greedy algorithm obtains at least a $1 - e - \epsilon$ approximation of the optimal spread. Here $\epsilon$ refers to the error incurred in computing the marginal gains. Since marginal gain computation is expensive and we need to calculate the expected marginal gains of all the nodes in the graph for choosing the next seed, the greedy algorithm is not scalable. In particular, the time complexity for influence maximization is of the order $\Omega(kmn poly(\epsilon^{-1}))$ where $n$ is the number of nodes and $m$ is the number of edges in the network. Most innovation in influence maximization ~\cite{leskovec2007cost}~\cite{goyal2011simpath}~\cite{chen2010scalable} over the recent year involves making the influence maximization procedure more scalable. A recent paper ~\cite{Tang2014Influence} proposes a randomized algorithm which does away with the use of MCMC simulations and has an expected time complexity of $O((k+l)(n+m)logn / \epsilon^{2})$ and returns a $(1- e - \epsilon)$ approximate solution with $1 - n^{-l}$ probability. This algorithm is around 4 orders of magnitude faster than the previous state of the art methods for influence maximization. 

A more natural variant of this problem is the adaptive setting in which the marketer gives free products / discounts to some of the users initially, then checks the market reaction and depending on the market feedback formulates a strategy to seed other nodes. For example, the marketer may seed the first $k/2$ nodes and wait for a certain time to observe how the information has spread in the real world. Based on the observation about the market reaction, the marketer will choose the remaining seeds to maximize the spread. Thus the marketer is able to incorporate his observations about the market in selecting the subsequent seeds and hence we expect an adaptive strategy will lead to a better spread in the true world. This can also be seen from the point of view of possible worlds. The marketer cares about maximizing the influence / spread in the (unknown) true world. The initial seeds are chosen to maximize the expected number of active nodes where the expectation is over the exponentially many possible worlds. Because of averaging, the seed set can lead to an arbitrarily bad spread in the true world. In the above example, when the marketer chooses the initial seeds and observes the diffusion in the network, he discovers some part of the true possible world. This eliminates some of the possible worlds which do not satisfy the discovered structure about the network. Thus while selecting the remaining seeds, the averaging needs to be done across the possible worlds which are closer (in the structural sense) to the true possible world. This will lead to a seed set which will lead to better spread in the true world. This provides intuition on why an adaptive strategy will be better than a non-adaptive strategy. 

Usually the marketer cares about maximizing the influence in the network within a certain time period. This introduces the notion of time horizon $T$ which is the maximum time to which a diffusion process can be run. This leads to a natural extension of the above problem namely given a budget of size $k$, find the initial seed nodes which will maximize the expected number of activated users within the time horizon. Since we are concerned with only a discrete diffusion process in which one hop corresponds to one timestep, the maximum time a diffusion process can run for is equal to the length of the longest path in the graph. In this paper, we consider two variants of the influence maximization problem - the infinite time horizon case in which we don't care about when a user gets influenced and the finite time horizon case which more closely resembles reality. Most social networks are small world networks ~\cite{newman2003structure} and have a small longest path. Hence, for our analysis we assume that the time horizon is more than the length of the longest simple path of the network. For the non-adaptive policy, in which all the seeds are selected at time $t = 0$, the assumption on the time horizon implies that influence maximization for the finite time horizon is the same as that for that infinite case. For the adaptive policy however, a bounded time horizon imposes restrictions on the policies which will be able to maximize the influence within a given time. We formalize this problem in later sections. 

A marketer may have a certain target of users he wants to influence within some time. This problem involves finding the smallest (or more generally minimum cost) set of nodes which when activated initially will achieve a certain target spread $Q$. This problem has been addressed in the non-adaptive context by Goyal.et.al ~\cite{goyal2010approximation} who refer to this problem as MINTSS and prove bounds for a bi-criteria approximation. We consider the adaptive variant for this problem as well. This problem has also been addressed in the adaptive setting in ~\cite{golovin2011adaptive}~\cite{guillory2010interactive}~\cite{chen2013near} all of which derive bounds for adaptive policies for MINTSS problem. For this problem too, we explore both the infinite time horizon and the bounded time horizon cases. 

The paper is structured as follows: Section~\ref{sec:Related-Work} describes the related literature on adaptive policies for influence maximization. Section~\ref{sec:Problem-Definition} mathematically formulates the problem we wish to solve. We state our major theoretical results in section~\ref{sec:Theory} and present our algorithms in section~\ref{sec:Algorithms}. We describe the experimental setup and our results in section~\ref{sec:Experiments}. Finally,  we conclude the paper in section~\ref{sec:Conclusion} and outline interesting directions for future work in section~\ref{sec:Future-Work}. 
}

\section{Related Work}
\label{sec:Related-Work}
\SV{
\eat{In this section, we present a brief summary of non-adaptive influence maximization and mention the work on adaptive influence maximization in detail. As mentioned in section~\ref{sec:Introduction}, the objective of viral marketing is to maximize the expected number of users who adopt the product(expected spread) given a budget constraint on the number of users to which the marketer can give free products / discounts.} 
\noindent 
{\bf Non-adaptive}: Two classical models of diffusion that have been extensively studied are Independent Cascade (IC) and the Linear Threshold (LT) \cite{kempe2003maximizing}. Both are discrete-time models. 
\eat{In the IC model, the focus of this paper, each active node at time $t$ has one indepenent chance to activate each of its out-neighbors. An attempt succeeds with probability given by the corresponding edge weight. If the attempt succeeds, the neighbor will become active at time $t+1$.}
 The expected spread function under both these models is monotone and submodular. A real-valued set function $f:2^U\rightarrow R$ is \emph{monotone} if $f(S) \le f(S'), \forall S\subset S'\subseteq U$. It is \emph{submodular} if $\forall S\subset S' \subset U$ and $x\in U\setminus S' $, $f(S' \cup \{x\}) - f(S') \le f(S\cup\{x\}) - f(S)$, i.e., the marginal gain (increase in the objective function) by adding an element to a set cannot increase as the set grows. 
\eat{This is also known as the diminishing returns property.} 
While \im\ under both IC and LT models is NP-hard, a simple greedy algorithm \cite{nemhauser1978analysis} provides a $(1-1/e)$-approximation to the optimal solution.}
\C{The expected spread function $\sigma$ in a social network, under popular models such as IC and LT, is submodular \cite{kempe2003maximizing}. In viral marketing, we would like to select a seed set of a certain size $k$ which maximizes $\sigma$.} 
The greedy algorithm involves successively selecting the node with the highest marginal gain. 
\C{Nemahuaser et al.  \cite{nemhauser1978analysis} showed that for monotone submodular functions, the greedy algorithm achieves a $(1 - 1/e)$-approximation of the optimal value.} 
Since computing the expected spread of a given set (and hence marginal gain) is \#P-hard for both IC and LT models \cite{weic-icdm2010, chen2010scalable}, Kempe et al.  \cite{kempe2003maximizing} advocated using MCMC simulations to estimate marginal gains. Using MCMC estimation of the marginal gain, the greedy algorithm yields a $(1-1/e-\epsilon)$-approximation to the optimum, where $\epsilon>0$ is the error in marginal gain estimation. 
Tang et al. \cite{Tang2014Influence} propose a near-optimal (w.r.t. time complexity) randomized greedy $(1-1/e-\epsilon)$-approximation algorithm for \im. It uses the concept of random reverse reachable (RR) sets to achieve this. We briefly review their procedure in Section~\ref{sec:Algorithms}. Here, we note that it is currently the state of the art for \im\ and has been shown to scale to a billion-edge network  \cite{Tang2014Influence}. 

\noindent 
{\bf Adaptive}: Adaptive influence maximization has been proposed previously in ~\cite{golovin2011adaptive, guillory2010interactive, chen2013near}.
Golovin and Krause~\cite{golovin2011adaptive} extend the definitions of submodularity and monotonicity to the adaptive setting. In the adaptive setting, batches of nodes are seeded at different intervals. When a batch is seeded, an \emph{actual} diffusion (called realization in \cite{golovin2011adaptive}) unfolds as per the classical IC model. The next batch is chosen based on the previously observed cascade. An objective function is \emph{adaptive monotone} and \emph{adaptive submodular} if the marginal gain of every element is non-negative and non-increasing in every possible realization, as the size of the set (alternatively length of the policy) increases. We wish to choose a policy that maximizes such an objective function in the adaptive setting. As before, the greedy policy consists of selecting the node with the maximum marginal gain. Golovin and Krause  \cite{golovin2011adaptive} derive average case bounds on the performance of greedy adaptive policies. They also prove bounds on the greedy adaptive policy for adaptive submodular functions under matroid constraints~\cite{golovin2011adaptive2}. 
They assume an edge level feedback mechanism with the IC  model and show that the expected spread is adaptive monotone and adaptive submodular, guaranteeing an approximation algorithm. 
Guillory et al.~\cite{guillory2010interactive} study the problem of submodular set cover in the adaptive setting in which the objective is to minimize the total number of sets required to cover a certain target set and prove worst case bounds for the greedy adaptive policy. 
They briefly describe how their framework can be used for influence maximization in a social network with hidden information (e.g., hidden preferences of  users). 
In this paper, we consider the more traditional influence maximization problem and assume that users do not have any hidden preferences. We establish average case guarantees similar to~\cite{golovin2011adaptive}.
Finally, ~\cite{chen2013near} addresses the adaptive \mintss\ problem and shows that under certain conditions, the batch-greedy adaptive policy, in which the seeds are chosen in batches in a greedy manner, is competitive not only against the sequential greedy policy (choosing one seed at a time) but also against the optimal adaptive policy. 
As explained in the introduction, the key difference between these papers and our work is that unlike them, we adopt a more realistic node level feedback, establish bounds relating adaptive greedy policy with the non-adaptive greedy algorithm, thus answering the question, in practice what does one gain by going adaptive. This question is not answered by simply comparing adaptive greedy with the optimal policy which anyway could not be used in a real network owing to its intractability.

\C{As mentioned in the introduction, unlike previous work, we characterize the \emph{practical} benefit of going adaptive by comparing greedy non-adaptive and greedy adaptive policies. We study adaptive influence maximization under both unbounded and bounded time horizon. Finally, our experiments are run on much larger datasets compared to previous work on adaptive influence maximization.}

\section{Problem Definition}
\label{sec:Problem-Definition}
We consider a directed social network $G = (V, E)$, with $|V| = n$ and $|E| = m$ with the edge weights giving the influence probabilities between two users. In this paper, we assume the independent cascade (IC) diffusion model. In this model, time proceeds in discrete steps. At time $t=0$, the seed nodes are active. Each active user gets one chance to influence/activate her neighbor in the next time step. This activation attempt succeeds with the corresponding influence probability between the two users. An edge along which an activation attempt succeeded  is said to be ``live'' whereas the other edges are said to be ``dead''. 
This leads to $2^{|E|}$ \emph{possible worlds} 
of the network. One of these possible worlds is the \emph{true} world, which reflects the reality of which activation attempts succeeded/failed and of the whole cascade/diffusion that occurred starting from the batches of seeds chosen at various times. Clearly, we don't know the true world, although as described below, in an adaptive setting, it may be revealed partially from successive seed selections. 

\eat{In the ''real'' world, the status of all these edges will be uniquely determined resulting in a deterministic network. Such a deterministic network, which gives information about the reality (how the information diffused from a set of initially active nodes), is one among the exponentially many possible worlds. Since it reflects  reality, we call it the \emph{true}  
world.} 

We consider influence maximization in the adaptive setting, where we are given a seed budget $k$ and a time horizon $T$. The \emph{state} of a network at time $t$ is used to determine the seed(s) which will be selected at that time. The precise definition of state depends on the type of feedback model. Golovin and Krause \cite{golovin2011adaptive} consider an edge level feedback, where they assume after a batch is seeded, the precise status (live/dead?) of every edge is observable. We will instead adopt the \emph{node level} feedback model, whereby we assume the status (active/inactive?) of every node in the network is observable. 
\eat{Under edge level feedback, we know whether edges incident on current active nodes are live/dead, where a node is active iff it's reachable from seeds via live edges \cite{golovin2011adaptive}.}  
A \emph{network state} is a mapping $\psi_t:V\rightarrow \{0,1\}$. For convenience, $\psi_t$ can be treated as the set of active nodes at time $t$. At $t=0$, the nodes seeded at $t = 0$ are the only active nodes. Let $\gell$ denote the true world. It is easy to observe which nodes become active: e.g., which users buy a product or share a particular story on their Facebook page. These are precisely the nodes reachable from seeds in $\gell$ within a given \emph{time horizon} $T$. However, we cannot observe which edges of the network $G$ are actually present (i.e., are live) in $\gell$: e.g., we can't trace which friend of a particular user influenced her to buy the product. Hence node level feedback is a more realistic feedback model. Notice that it makes weaker assumptions about what is observable, than edge level feedback. 

\eat{ 
\SV{In the next section, we prove that 
the two models are equivalent from the perspective of the marginal gains computed and the subsequent seeds selected.}} 

By a \emph{policy}, we mean a seed selection strategy, i.e., a mapping $\pi: \psi_t \rightarrow 2^V$, where $\psi_t$ is the network state at current time $t$. Thus, a policy specifies which nodes to seed next, given the current network state. It is possible to have $\pi(\psi_t) = \{\}$. This means the policy chooses to not seed any nodes and just ``wait''. The policy thus transforms a network from its current state $\psi_t$ to the new state $\psi_{t+1} := \psi_t \cup \pi(\psi_t)$. The network state can change from seeding as well as from the spreading of influence from current active nodes under the diffusion model. We use $f(\pi)$ to denote the spread of a policy $\pi$ within a given time horizon $T$ in the true world (the world $\gell$ which is realized in reality). The seeds selected by an adaptive policy depend on the true world whereas for the non-adaptive case, the selected seeds are independent of the true world. We care about the performance of both strategies in the true world. Since {\sl we} don't know the true world, we quantify the \emph{average} gain obtained by going adaptive, and do so by generating a number of candidate true worlds, finding the spread $f(\pi)$ in each of these worlds and taking the average. We denote the average performance of a policy $\pi$ as $\favg(\pi)$. $\sigma$ refers to the expected spread without any feedback from the network whereas $f_{avg}$ is the average true spread given a feedback model. Note that $\sigma$ is equal to $f_{avg}$ for a non-adaptive policy since it does not use any feedback. 

We use $\pi_k$ to denote a policy constrained to select $k$ seeds. When the policy $\pi$ seeds a node, i.e., when $\pi(\psi_t)\ne\emptyset$, we refer to it as an \emph{intervention}. Since the diffusion model we  use is discrete, the diffusion process completes in a maximum of $D$ time-steps where $D$ is the length of the longest simple path in the network. If 
the last intervention occurred at time $t$ and there are no subsequent interventions, then we have $\psi_{t + D + 1} = \psi_{t + D}$. 

Since we have a model of the diffusion process, we can simulate reality (i.e., generate or sample a set of candidate true worlds) and gauge policies by measuring their performance against the set of candidate true worlds. Thus, policy design is based on maximizing the average performance over the set of candidate worlds. These worlds constitute a training set in standard machine learning. The true world (the test sample in this analogy) will be generated from the same diffusion process, i.e., it will be drawn from the same probability distribution. Since the policy is known before we actually implement it in the real world, these are \emph{offline} policies, as opposed to \emph{online} policies, which are determined dynamically depending on the state of the network and the time left. 
In this paper, we focus on offline policies. 

\C{Suppose a policy performs an intervention at time $t$. We now describe the feedback obtained from the network after the intervention. In the edge level feedback model considered in \cite{golovin2011adaptive}, we can observe the status (live/dead) of every edge exiting $v$ for all nodes $v$ which are reachable via live edges in the true world from the initially activated nodes. Live edges in the true world $\gell$ correspond  to the edges along which activation attempts succeeded in reality. 
}

\eat{In the simplest most intuitive case, the marketer can observe which users bought the item / shared the article on their facebook timeline etc. Thus he can observe all of the active nodes in the network. We call this feedback model node level feedback. Golovin and Krause in~\cite{golovin2011adaptive} describe a stronger feedback model which they call Full adoption feedback. The feedback model states that we know the status (live/dead) of every edge exiting $v$ for all nodes $v$ which are reachable via live-edges in the true world from the initially activated nodes. Live edges in the true world refer to the edges along which activation attempts succeeded in reality or alternatively the edges which are present in the deterministic network representing the true world. 
} 

\eat{ If the diffusion process is allowed to complete, the nodes reachable (via live edges) from the initially activated seed nodes correspond to the set of active nodes. Hence edge level feedback implies that we know the status of all the edges exiting every active node. Edge level feedback is impractical since a marketer cannot observe the state of all edges in the entire network, she can only see which users shared a story or bought a product. 
}

We next formalize the two problems studied in this paper. We assume $\calc$ is a set of candidate true worlds, chosen by some oracle, which will be used for computing $\favg$.

\begin{problem}[\im] 
\label{prob:prob1} 
Given a directed probabilistic network $G = (V,E)$, seed budget $k$, and time horizon $T$, find the optimal policy $\pioptk = arg\,max \{\favg(\pi_k)\}$. 
\end{problem} 
\vspace{-0.1 cm}
Notice that in the calculation of spread $f$ (and hence $\favg$), the diffusion is restricted to a length of at most $T$, where $T$ is the given time horizon. When $T \ge kD$, we say the time horizon is \emph{unbounded}. Otherwise, it is \emph{bounded}. We consider \im\ under both bounded and unbounded horizons. 

\eat{ 
For the influence maximization problem, let $k$ be the initial budget. The conventional problem can be stated as: find the set $S$ of $k$ nodes which when activated initially will maximize the $\sigma$ under an appropriate diffusion model. Formally,
\begin{eqnarray}
maximize \hspace{0.25cm} \sigma(S) \nonumber \\ 
s.t. \hspace{0.2 cm} | S | \leq k
\label{eq:spread-problem}
\end{eqnarray}
For the adaptive case, the objective is to find an optimal policy $\pi_{k}$ which will maximize $f_{avg}$. 
} 

\mintss\ in the non-adaptive case aims to find a seed set of the smallest size such that the resulting expected spread is above a given threshold $Q$. The best known result states there is a bi-criteria approximation: given $\beta > 0$, the greedy algorithm yields a seed set of size $\ge OPT(1 + \ln{\lceil\frac{Q}{\beta}\rceil})$, where $OPT$ is the optimal seed set size and the expected spread of the greedy selection is $\ge Q-\beta$. We generalize it to the adaptive setting. Let $\cavg(\pi)$ denote the average number of seeds chosen by policy $\pi$ across the true worlds $\calc$ on which it is tested. 

\begin{problem}[\mintss] 
\label{prob:prob2}
Given a directed probabilistic network $G = (V,E)$, time horizon $T$, and spread threshold $Q$, find the policy $\piopt$ that leads to the smallest seed set, i.e., $\piopt = arg\,min \{\cavg(\pi) \mid f(\pi)\ge Q\}$. 
\end{problem}
\vspace{-0.1 cm}
\eat{ 
As defined in Section~\ref{sec:Introduction}, \mintss\  seeks to minimize the size of the seed set which when activated will lead to an estimated target spread of $Q$. Intuitively we need to find which users to give free products to, to achieve the target spread and minimize the cost of the advertising campaign. To make the problem less stringent, we allow for a shortfall of $\beta$ i.e a policy / set if feasible if it achieves a spread of at least $Q - \beta$. We can formally define the non-adaptive version of the problem as: find the set $S$ such that:
\begin{eqnarray}
minimize \hspace{0.25cm} | S | \nonumber \\
s.t. \hspace{0.2 cm} \sigma(S) \geq Q - \beta
\label{eq:cover-problem}
\end{eqnarray}
} 

Notice that policy $\pi$ may end up using different numbers of seeds in different choices of true worlds. $\cavg(\pi)$ is the average number of seeds chosen by $\pi$ across those worlds. 

Finally, notice that we require that the spread achieved by $\piopt$ must be $\ge Q$ in \emph{every} true world chosen and not on an average across candidate true worlds. This is motivated by practical considerations: if the policy underperforms (i.e., $f < Q$) in some worlds and the true world is among them, the marketer won't be happy! 

\eat{This is a special form of the more general minimum-cost partial cover problem in which the cost of seeding each node is different. Intuitively, this cost corresponds to the effort(eg: in terms of the percentage discount) that must be made in order to convince a user to adopt the product. A low cost thus corresponds to a loyal customer who will willingly adopt the product. In ~\cite{golovin2011adaptive}, the authors proved that the results for MINTSS can be easily generalized to the case where costs of seeding a node are non-uniform and arbitrary.}  

\eat{ 
For the adaptive version of this problem, the objective is to find a policy $\pi$ which will attain the target spread of $Q$ in every possible true world and use the minimum number of seeds. We will see how to solve the problem in its most general form (when the time horizon is bounded) in section~\ref{sec:Algorithms}. 
} 

\section{Theoretical Results}
\label{sec:Theory}
Recall, we say time horizon is unbounded if $H \ge kD$, where $k$ is the seed budget, $D$ is the length of the longest path of $G$, and $H$ is the time horizon. We consider unbounded time horizon up to Section~\ref{sec:mintss}. Bounded time horizon is addressed in subsection~\ref{sc:bounded}. Our first result is that our spread function based on node level feedback is adaptive monotone and adaptive submodular, thus affording an efficient approximation algorithm. 

\begin{theorem}
For unbounded time horizon, if the diffusion process is allowed to complete after every intervention, node level feedback is equivalent to edge level feedback w.r.t. marginal gain computation and therefore the expected spread function is adaptive submodular and adaptive monotone under node level feedback.
\label{th:th1}
\end{theorem}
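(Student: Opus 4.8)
The plan is to establish the claimed equivalence at the level of marginal gains and then obtain the two structural properties for node level feedback as a corollary of the Golovin--Krause result for edge level feedback. Write $\Delta(v\mid\psi_t)$ for the marginal gain of seeding $v$ in state $\psi_t$, i.e.\ the expected number of additional nodes activated, where the expectation is taken over all possible worlds consistent with the feedback observed so far. Under edge level feedback the conditioning is on the observed live/dead status of every edge leaving an active node, whereas under node level feedback it is only on the active set $\psi_t\subseteq V$. The theorem reduces to showing that these two conditional expectations coincide for every state $\psi_t$ reached after a completed diffusion, and then transferring adaptive monotonicity and adaptive submodularity.

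First I would isolate why the two models collapse once each diffusion is run to completion (guaranteed here since $T\ge kD$). Partition $E$ into edges leaving an active node and edges leaving an inactive node. Because the diffusion has completed, every edge from an active node to a currently inactive node must be dead --- otherwise its inactive endpoint would already be active --- so this fact is already implied by knowing $\psi_t$ alone. The only remaining edges leaving active nodes point to other active nodes; their status is exactly the extra information that edge level feedback records over node level feedback, but it is irrelevant to any future marginal gain, since both endpoints are already active and contribute nothing more to the spread.

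Next I would show that a future marginal gain is governed solely by the statuses of edges internal to $V\setminus\psi_t$ (together with the out-edges of $v$ when $v$ is inactive), and that this randomness has the same prior distribution under both feedback models. A newly seeded $v$ can create new spread only by reaching currently inactive nodes, and no such path can pass through an active node, since the last edge entering a new node from an active node would have to be live, which is impossible after completion. The key step is then to argue that conditioning on the active set $\psi_t$ reveals nothing about edges leaving inactive nodes: flipping any subset of these edges cannot change the set reachable from the seeds, because the forward reachability computation never traverses an out-edge of an unreached node, so by the independence of edge outcomes in the IC model their joint posterior equals the prior. The same holds under edge level feedback, whose additionally revealed edges (those among active nodes) are independent of the edges internal to $V\setminus\psi_t$. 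Hence $\Delta_{\text{node}}(v\mid\psi_t)=\Delta_{\text{edge}}(v\mid\phi)$ for every edge level state $\phi$ inducing active set $\psi_t$, which is the asserted equivalence w.r.t.\ marginal gain computation.

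Finally I would transfer the properties. Golovin and Krause~\cite{golovin2011adaptive} proved that under edge level (full adoption) feedback the expected spread satisfies $\Delta_{\text{edge}}(v\mid\phi)\ge 0$ and that $\Delta_{\text{edge}}$ is non-increasing along subrealizations $\phi\subseteq\phi'$. Any run of the adaptive process yields, after each completed diffusion, an edge level observation, and these observations form an increasing chain of subrealizations whose induced active sets are exactly the node level states encountered; combining this with the identity above gives $\Delta_{\text{node}}(v\mid\psi_t)\ge 0$ (adaptive monotonicity) and $\Delta_{\text{node}}(v\mid\psi_t)\ge\Delta_{\text{node}}(v\mid\psi_{t'})$ whenever $\psi_t\subseteq\psi_{t'}$ (adaptive submodularity). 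I expect the main obstacle to be the independence argument in the third step: rigorously certifying that observing which nodes are active conveys no information about the out-edges of inactive nodes, and that the extra edge level information is independent of the spread-relevant edges. This rests on the fact that a node's inactivity constrains its incoming paths rather than its outgoing edges, so it must be argued carefully via the independence of edge outcomes. The transfer itself is then essentially bookkeeping, the only care needed being to match the subrealization orders of the two state spaces, which is automatic because the realized feedback along a run already supplies the required nested chain.
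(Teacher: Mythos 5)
Your proposal is correct and follows essentially the same route as the paper's proof: after a completed diffusion, every active-to-inactive edge is forced to be dead (the paper's Rule~1), the statuses of edges between two active nodes are the only extra information edge level feedback provides and are irrelevant to any future marginal gain (the paper's Rule~3 argument), and the equivalence then lets the adaptive monotonicity and adaptive submodularity of Golovin and Krause's edge level result carry over. Your explicit posterior-equals-prior argument for the out-edges of inactive nodes is a slightly more careful treatment of the conditioning than the paper's path-decomposition argument, but it is the same underlying idea.
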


\begin{proof} 
We will show that node level feedback is equivalent to edge level feedback from the perspective of marginal gain computation. In~\cite{golovin2011adaptive}, the authors show that the expected spread function under edge level feedback is adaptive monotone and adaptive submodular. The above theorem will follow from this. 
Specifically, we prove that (a) for every edge level feedback based network state, there is a corresponding state based on node level feedback, which preserves marginal gains of nodes, and (b) vice versa. 

Given edge level feedback, we clearly know which nodes are active. These are precisely nodes reachable from the seeds via live edge paths in the revealed network. In the rest of the proof, we show that for each node level feedback state, there is a corresponding edge level feedback state that preserves marginal gains. Let $S_0$ be the set of seeds chosen at time $t = 0$. Given node level feedback, we can infer the corresponding edge level feedback based network state using the following rules. 
\eat{Edge level feedback states that we know the status of each edge coming out of any active node. } 
\eat{We consider 3 possible cases to derive rules for inferring the status of the edge using node level feedback.} 
Consider an edge from  an active node $u$ to node $v$. 
\eat{According to  edge level  feedback, the status of $(u,v)$ is known in all the 3 cases.} 
Notice that the status of an edge leaving  an inactive node is unknown in either feedback model. 

\noindent 
\underline{Rule 1}: If node $u$ is active, $v$ is inactive, and there is an edge from $u$ to $v$, then infer that edge $(u,v)$ is dead. \\ 
\underline{Rule 2}: If nodes $u$ and $v$ are both active and $u$ is the only in-neighbor of $v$, then conclude that the edge $(u,v)$ is live. \\ 
\underline{Rule 3}: If nodes $u$ and $v$ are both active and $u$ has more than one in-neighbor, arbitrarily set the status of the edge $(u,v)$ to be live or dead. 

\eat{ 
Edge $(u,v)$ is dead and node $v$ is inactive. If node level feedback infers that if there is a link between the an active and inactive node, that link can be concluded to be dead. 
\item Case 2: Edge $(u,v)$ is live and hence node $v$ is active. $u$ is the only active neighbour of $v$. Node level feedback infers that in this case the link $(u,v)$ is active. 
\item Case 3: Edge $(u,v)$ is live and hence node $v$ is active. $u$ is the not only active neighbour of $v$. Under the IC model, node $v$ could have become activated due to some other active neighbour and we can't conclude whether the link $u$ and $v$ is live or not. 
\end{itemize}
} 

We now show that the way edge status is chosen to be live or dead in Rule 3 plays no role in determining the marginal gains of nodes. We make the observation that if the diffusion process is allowed to complete after each intervention, the only extra information about the network that is observed using edge level  feedback over node level feedback is the status of edges between 2 active nodes. Given that the node $u$ is active, we need to calculate the marginal gain of every other node in the network for the next intervention. Next we show that the status of edges between 2 active nodes does not matter in the marginal gain computation for any node.

For the rest of the argument, we consider the both $u$ and $v$ are active and that $v$ has multiple active in-neighbours, i.e., the case that is addressed by Rule 3. Consider an arbitrary node $w$ the marginal gain of which we need to calculate. There maybe multiple paths from $w$ to a node reachable from $w$. These paths can be classified into those which involve the edge $(u,v)$ and ones which don't. The marginal gain computation involving the latter paths is independent of the status of the edge $(u,v)$. Since the diffusion process is allowed to complete, all nodes which can be reached (in the "true" possible world) from $w$ through $(u,v)$ have already been activated. Hence paths going through $(u,v)$ do not contribute to the marginal gain for $w$. Thus, the status of the edge $(u,v)$ does not matter. 
Since $w$ is any arbitrary node, we can conclude that the marginal gain of every node remains the same under states based on both feedback models. Adaptive monotonicity and submodularity are properties of marginal gains. Since marginal gains are preserved between edge level and node level feedback, it follows that these properties carry over to our node level feedback model.
\end{proof}
\vspace{-0.25 cm}

\subsection{\im} 
\label{sec:im} 
There are four types of policies -- the greedy non-adaptive policy (abbreviated GNA), the greedy adaptive policy (GA), the optimal non-adaptive policy (ONA) and the optimal adaptive policy (OA). We use $\pi_{GA,k}$ to denote the greedy adaptive policy constrained to select $k$ seeds and $\sigma(\pi_{GA,k})$ to refer to the expected spread for this policy. While previous results bound the performance of greedy (adaptive) policies in relation to optimal adaptive policies, they do not shed light on practically implementable policies under either setting. These previous results do not quite answer the question "What do we gain in practice by going adaptive?" since both the optimal non-adaptive or optimal adaptive policies are intractable. We establish relations between two key practical (and hence implementable!) kinds of policies  -- the greedy non-adaptive policy and the greedy adaptive policy -- for both \im\ and \mintss. These relations quantify the average ``adaptivity gain'', i.e., the average benefit one can obtain by going adaptive.

We first restate Theorem 7 from~\cite{chen2013near}. This theorem gives a relation between the spreads obtained using a batch greedy adaptive policy which is constrained to select seeds in batches of size $b$ and the optimal adaptive policy.
\begin{fact}
If $\sigma(\pi_{GA,lb})$ is the average spread obtained by using a greedy batch policy with a batchsize $b$ and $\sigma(\pi_{OA,mb})$ is the spread using an optimal sequential policy (the optimal policy if we are able to select one seed per intervention) constrained to selecting a number of seeds divisible by the batchsize $b$, then
\begin{equation}
\sigma(\pi_{GA,lb}) > ( 1 - e ^ { \frac{-l}{\alpha \gamma m} } ) \sigma(\pi_{OA,mb})
\label{fact:fact1}
\end{equation}
where $\alpha$ is the multiplicative error in calculating the marginal gains. $gamma$ is a constant and equal to $(\frac{e}{e - 1}) ^ {2}$.
\end{fact}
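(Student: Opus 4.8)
The plan is to adapt the standard greedy-approximation analysis for adaptive submodular maximization (in the style of Golovin and Krause) to the batch setting, while carefully tracking two separate approximation losses that together produce the constant $\gamma=(e/(e-1))^2$, as well as the multiplicative marginal-gain estimation error $\alpha$. The starting point is Theorem~\ref{th:th1}, which guarantees that under node level feedback the expected spread $\sigma$ is adaptive monotone and adaptive submodular; this is exactly the hypothesis needed to run the adaptive greedy argument, since adaptive submodularity is what licenses the per-step progress inequality below.

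First I would set up a per-batch progress inequality. Write $V_i$ for the expected spread accumulated by the batch greedy policy after $i$ batches, so that $V_0 = 0$ and $V_l = \sigma(\pi_{GA,lb})$. Using adaptive submodularity together with the standard averaging argument, the expected marginal gain of the best possible batch of size $b$, taken over the partial realizations induced by the first $i-1$ batches, is at least $\tfrac{1}{m}\bigl(\sigma(\pi_{OA,mb})-V_{i-1}\bigr)$: the optimal sequential policy selecting $mb$ seeds can be decomposed into $m$ conceptual batches, and by averaging at least one of them closes a $1/m$ fraction of the remaining gap to $\sigma(\pi_{OA,mb})$.

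Next I would insert the two constant-factor losses. One factor of $e/(e-1)$ arises because the greedy policy does not pick the optimal batch of size $b$; instead it selects the $b$ seeds of a batch one at a time greedily, which recovers only a $(1-1/e)$ fraction of the value of the best size-$b$ batch (itself a submodular maximization instance). The second factor of $e/(e-1)$ arises from comparing a batch, which commits to all $b$ seeds without intermediate feedback, against the sequential benchmark, which adapts after every single seed; bounding this adaptivity-within-a-batch gap again costs a $(1-1/e)$ factor. Folding in the estimation error $\alpha$ (each computed marginal gain is within a factor $1/\alpha$ of the true value) weakens the per-batch progress to
\[
V_i - V_{i-1} \;\ge\; \frac{1}{\alpha\gamma m}\bigl(\sigma(\pi_{OA,mb}) - V_{i-1}\bigr).
\]

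Finally I would unroll this recursion. Rewriting it as $\sigma(\pi_{OA,mb})-V_i \le \bigl(1-\tfrac{1}{\alpha\gamma m}\bigr)\bigl(\sigma(\pi_{OA,mb})-V_{i-1}\bigr)$ and iterating $l$ times gives $\sigma(\pi_{OA,mb})-V_l \le \bigl(1-\tfrac{1}{\alpha\gamma m}\bigr)^{l}\sigma(\pi_{OA,mb}) \le e^{-l/(\alpha\gamma m)}\sigma(\pi_{OA,mb})$, which rearranges to the claimed bound. The main obstacle is the middle step: rigorously isolating the two $e/(e-1)$ factors in the adaptive, realization-averaged setting. In particular, coupling the optimal sequential policy (which re-adapts after each seed) with the batch greedy policy (which re-adapts only every $b$ seeds) requires a careful argument over the distribution of partial realizations, and it is precisely this ``non-adaptivity within a batch'' penalty together with the ``greedy-within-a-batch'' penalty that must be quantified to recover $\gamma=(e/(e-1))^2$ rather than a looser constant.
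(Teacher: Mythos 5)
The paper offers no proof of this statement at all: it is stated as a \emph{Fact} and explicitly introduced as a restatement of Theorem~7 of Chen et al.~\cite{chen2013near}, so there is no in-paper argument to compare yours against. That said, your sketch is a faithful reconstruction of how the cited result is actually obtained: a per-batch progress inequality derived from adaptive monotonicity and adaptive submodularity by averaging over the $m$ conceptual length-$b$ blocks of the sequential benchmark, degraded by one factor of $e/(e-1)$ for assembling each batch greedily (a non-adaptive submodular maximization over sets of size $b$) and a second factor of $e/(e-1)$ for the adaptivity gap between committing to a set of $b$ seeds and running an adaptive sub-policy of length $b$, with the factor $\alpha$ absorbing the marginal-gain estimation error; unrolling $\bigl(1-\tfrac{1}{\alpha\gamma m}\bigr)^{l}\le e^{-l/(\alpha\gamma m)}$ then yields the stated bound, and this is exactly where $\gamma=(e/(e-1))^{2}$ comes from. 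Two caveats. First, what you have written is a proof plan, not a proof: the two constant-factor lemmas you correctly single out as the crux (greedy-within-batch, and non-adaptivity-within-batch, each argued conditionally on a partial realization and then averaged) are precisely the technical content of the cited theorem, and your sketch acknowledges rather than establishes them. Second, you ground the argument in Theorem~\ref{th:th1} (node level feedback), whereas the imported result is proved in \cite{chen2013near} under edge level feedback; in this paper the role of Theorem~\ref{th:th1} is to show the two feedback models agree on marginal gains so that the Fact transfers, which is consistent with your usage but is a separate step from the Fact itself.
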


\begin{prop}
\label{prop:p1} 
Let the horizon be unbounded. Let $\pi_{GA,n_{GA}}$ be the greedy batch policy that select $n_{GA}$ seeds overall in batches of size $b_{GA}$, and let $\pi_{OA,n_{OA}}$ be the optimal adaptive policy that selects $n_{OA}$ seeds overall in batches of size $b_{OA}$. Then 
\begin{equation}
\sigma(\pi_{GA,n_{GA}}) \geq \Bigg[1 - \exp{ \bigg( - \frac{\big\lceil \frac{n_{GA}}{b_{GA}} \big\rceil }{\alpha \gamma \big\lceil \frac{n_{OA}}{b_{OA}} \big\rceil } \bigg) }\Bigg] \sigma(\pi_{OA,n_{OA}})
\label{prop:prop1}
\end{equation}
where $\alpha \ge 1$ is the multiplicative error in calculating the marginal gains and  $\gamma = (\frac{e}{e - 1}) ^ {2}$ is a constant.
\end{prop}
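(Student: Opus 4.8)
The plan is to derive Proposition~\ref{prop:p1} as a direct instantiation of Fact~1, reading the two parameters $l$ and $m$ appearing there as the respective numbers of \emph{interventions} (batches) performed by the greedy and the optimal policies. First I would identify $l = \lceil \frac{n_{GA}}{b_{GA}} \rceil$ as the number of interventions made by $\pi_{GA,n_{GA}}$: this policy seeds $b_{GA}$ nodes per intervention until its budget $n_{GA}$ is spent, the final batch possibly being partial, which is exactly what the ceiling records. Symmetrically, I would set $m = \lceil \frac{n_{OA}}{b_{OA}} \rceil$ as the number of interventions made by $\pi_{OA,n_{OA}}$. With these identifications the exponent $-\frac{l}{\alpha\gamma m}$ of \eqref{fact:fact1} becomes precisely the exponent appearing in \eqref{prop:prop1}, and the constants $\alpha \ge 1$ and $\gamma = (\frac{e}{e-1})^2$ carry over unchanged.

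Matching the greedy side is immediate: since $\pi_{GA,n_{GA}}$ seeds in batches of a fixed size $b_{GA}$, it satisfies the hypothesis of Fact~1 verbatim with $b := b_{GA}$ and $l := \lceil \frac{n_{GA}}{b_{GA}} \rceil$. The non-trivial step is the optimal side, because Fact~1 is stated against the optimal \emph{sequential} policy (batch size $1$), whereas Proposition~\ref{prop:p1} compares against the optimal adaptive policy constrained to batch size $b_{OA}$. Here I would invoke the adaptive submodularity and adaptive monotonicity of the node-level spread established in Theorem~\ref{th:th1}: a finer batching only grants the optimal policy more feedback and thus cannot decrease its achievable spread, so the optimal sequential policy dominates $\pi_{OA,n_{OA}}$. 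Replacing Fact~1's sequential optimum by the (weaker) batched optimum $\pi_{OA,n_{OA}}$ on the right-hand side can therefore only shrink that side, which preserves the inequality in the direction we need and yields the claimed bound $\sigma(\pi_{GA,n_{GA}}) \ge \big[1 - \exp(-\frac{l}{\alpha\gamma m})\big]\,\sigma(\pi_{OA,n_{OA}})$ after substitution.

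I expect the main obstacle to be the bookkeeping around the two distinct batch sizes and the ceilings. In Fact~1 the parameter $m$ is tied to the number of optimal seeds measured in units of the \emph{greedy} batch size $b$, so I must argue carefully that substituting $b = b_{GA}$ while letting the optimum run for $\lceil \frac{n_{OA}}{b_{OA}} \rceil$ interventions is legitimate, and that the seed counts line up (or that any slack is absorbed by the monotonicity of $\sigma$, since adding seeds never decreases spread). I would dispatch the divisibility issue by observing that an incomplete final batch still consumes one full intervention, which is exactly why each batch count is rounded up rather than truncated. Once these accounting points are pinned down, the result follows by plugging the intervention counts into \eqref{fact:fact1}.
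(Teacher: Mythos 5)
Your proposal takes essentially the same route as the paper: its proof likewise reads $l$ and $m$ in Fact~1 as batch counts, replaces them by $\lceil n_{GA}/b_{GA}\rceil$ and $\lceil n_{OA}/b_{OA}\rceil$, and appeals to the proof of Theorem~7 of Chen et al.\ to justify the generalization to distinct batch sizes and non-divisible seed counts. If anything you are more explicit than the paper about why the batched optimum may replace the sequential optimum on the right-hand side; the seed-count bookkeeping concerns you flag are real, but they are left equally unaddressed in the paper's own brief argument.
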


\begin{proof} 
Fact~\ref{fact:fact1} gives a relation between the spreads obtained by a batch greedy adaptive policy constrained to select $lb$ seeds and the optimal adaptive policy constrained to select $mb$ seeds. Both these policies are constrained to select seeds in batches of size $b$. The relation is in terms of the number of batches used by the policies. Let $l$ and $m$ be the number of batches for the greedy and optimal policies respectively. We make the following observations. First, the two policies can be constrained to select seeds in different batchsizes, $b_{GA}$ and $b_{OA}$ respectively. Next, the number of seeds selected by the policies need not be divisible by the batchsizes. We can follow a similar proof procedure as Theorem 7 in~\cite{chen2013near} and replace $l$ by $\lceil \frac{n_{GA}}{b_{GA}} \rceil$ and $m$ by $\lceil \frac{n_{OA}}{b_{OA}} \rceil$. 
\end{proof} 

\begin{theorem}
Let $\pi_{GNA,k}$ be a greedy non-adaptive policy, $\pi_{GA,k}$ and $\pi_{OA,k}$ be the greedy and optimal adaptive policies respectively with batch-sizes equal to one i.e. the adaptive policies are sequential. All policies are constrained to select $k$ seeds. Then we have the following relations:
\begin{equation}
\sigma(\pi_{GA,k}) \geq (1 - e^{-1 / \alpha \gamma}) \sigma(\pi_{OA,k})
\label{eq:maxspread-GA}
\end{equation}
\begin{equation}
\sigma(\pi_{GNA,k}) \geq (1 - \frac{1}{e} - \epsilon)^2 \sigma(\pi_{OA,k})
\label{eq:maxspread-GNA}
\end{equation}
\end{theorem}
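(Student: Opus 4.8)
The plan is to prove the two inequalities separately, since they rest on different machinery. The bound~\eqref{eq:maxspread-GA} will follow immediately from Proposition~\ref{prop:p1}, whereas~\eqref{eq:maxspread-GNA} will be obtained by composing two constant-factor guarantees with the optimal non-adaptive policy as an intermediary.

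First I would simply specialize Proposition~\ref{prop:p1} to the sequential case. Setting $b_{GA}=b_{OA}=1$ and $n_{GA}=n_{OA}=k$, both ceilings $\lceil n_{GA}/b_{GA}\rceil$ and $\lceil n_{OA}/b_{OA}\rceil$ equal $k$, so the exponent collapses to $-1/(\alpha\gamma)$ and the bracketed factor becomes $1-e^{-1/(\alpha\gamma)}$. This yields~\eqref{eq:maxspread-GA} with no further work. (The resulting constant is actually smaller than the one in~\eqref{eq:maxspread-GNA}; this is harmless, since both are merely lower bounds against $\sigma(\pi_{OA,k})$ and the looseness here stems from the batch analysis imported through Fact~\ref{fact:fact1}.)

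For~\eqref{eq:maxspread-GNA} I would introduce the optimal non-adaptive policy $\pi_{ONA,k}$ and chain two factors. The first comes from the non-adaptive side: since $\sigma$ is monotone and submodular for the IC model, the classic greedy guarantee, corrected for the MCMC error in marginal-gain estimation, gives $\sigma(\pi_{GNA,k})\geq(1-\tfrac1e-\epsilon)\,\sigma(\pi_{ONA,k})$. The second is an adaptivity-gap bound $\sigma(\pi_{ONA,k})\geq(1-\tfrac1e-\epsilon)\,\sigma(\pi_{OA,k})$; multiplying the two at once produces the factor $(1-\tfrac1e-\epsilon)^2$.

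The main obstacle is the adaptivity-gap inequality $\sigma(\pi_{ONA,k})\geq(1-\tfrac1e-\epsilon)\,\sigma(\pi_{OA,k})$, which asserts that no adaptive policy can beat the best fixed seed set by more than a factor $e/(e-1)$. To establish it I would lean on Theorem~\ref{th:th1}: under node-level feedback with an unbounded horizon the spread is adaptive monotone and adaptive submodular, so $\pi_{OA,k}$ is amenable to the standard averaging argument. The idea is to build a non-adaptive set greedily by prior expected marginal gains and show, by averaging over the realizations of $\pi_{OA,k}$, that the best available gain at each of the $k$ steps is at least $(\sigma(\pi_{OA,k})-\sigma(S_{cur}))/k$; the usual telescoping then delivers the $(1-1/e)$ factor, and $\pi_{ONA,k}$, being optimal among non-adaptive policies, inherits it. The delicate point I expect to require the most care is that the gains used to select a non-adaptive seed are computed under the prior rather than conditioned on observed feedback, so I must verify that adaptive submodularity still lower-bounds the prior gains appropriately (or bound the gap directly), and track the estimation error so that it contributes exactly the second $\epsilon$ and the two factors combine cleanly into $(1-\tfrac1e-\epsilon)^2$.
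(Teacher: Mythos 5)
Your argument matches the paper's proof: inequality \eqref{eq:maxspread-GA} is obtained by specializing Proposition~\ref{prop:p1} to $b_{GA}=b_{OA}=1$ and $n_{GA}=n_{OA}=k$, and inequality \eqref{eq:maxspread-GNA} by chaining the Nemhauser-type greedy guarantee against $\pi_{ONA,k}$ with the adaptivity-gap bound $\sigma(\pi_{ONA,k})\ge(1-\frac{1}{e}-\epsilon)\,\sigma(\pi_{OA,k})$. The only difference is that the paper imports that adaptivity-gap step wholesale by citing Theorem~2 of Asadpour et al.\ (its equation~\eqref{eq:adaptivity-gap}) rather than re-deriving it; your averaging-and-telescoping sketch is essentially the standard proof of that cited result, so nothing substantive is missing.
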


\begin{proof} 
Proposition~\ref{prop:p1} gives us bounds on the ratio of the spread achieved by batch-greedy adaptive policy and that achieved by the optimal adaptive policy. We set $n_{OA}$ = $k$ and $b_{GA} = b_{OA} = 1$ and obtain equation~\ref{eq:maxspread-GA} of the theorem.

Theorem 2 of~\cite{asadpour2009maximizing} states that for a submodular monotone function, there exists a non-adaptive policy which obtains $(1 - 1/e - \epsilon)$ fraction of the value of the optimal adaptive policy. In our context, this implies that the spread due to an optimal non-adaptive policy constrained to select $n_{ONA}$ seeds is within a $(1 - e^{-n_{ONA}/n_{OA}} - \epsilon)$ factor of the spread of an optimal adaptive policy selecting $n_{OA}$ seeds. More precisely,
\begin{equation}
\sigma(\pi_{ONA,n_{ONA}}) \geq (1 - e^{-n_{ONA}/n_{OA}} - \epsilon) \sigma(\pi_{OA,n_{OA}})
\label{eq:adaptivity-gap}
\end{equation}
The classical result from Nemhauser~\cite{nemhauser1978analysis} states that the greedy non-adaptive algorithm obtains a $(1 - 1/e - \epsilon)$ fraction of the value of the optimal non-adaptive algorithm, where $\epsilon$ is the additive error made in the marginal gain computation. Moreover if the greedy non-adaptive policy is constrained to select $n_{GNA}$ seeds and the optimal non-adaptive policy selects $n_{ONA}$ seeds we have the following: 
\begin{equation}
\sigma(\pi_{GNA,n_{GNA}}) \geq (1 - e^{-n_{GNA}/n_{ONA}} - \epsilon) \sigma(\pi_{ONA,n_{ONA}})
\label{eq:nemhauser}
\end{equation}
Combining equations~\ref{eq:adaptivity-gap} and~\ref{eq:nemhauser}, we obtain the following result
\begin{equation}
\sigma(\pi_{GNA,n_{GNA}}) \geq (1 - e^{-n_{GNA}/n_{OA}} - \epsilon)(1 - e^{-n_{GNA}/n_{OA}} - \epsilon) \sigma(\pi_{OA,n_{OA}})
\label{eq:gna-oa}
\end{equation}
Setting $n_{GNA}$ = $n_{GA}$ = $n_{OA} = k$, we obtain equation~\ref{eq:maxspread-GNA} of the theorem. 
\end{proof}

\textbf{Discussion:} To clarify what this theorem implies, lets assume that we can estimate the marginal gains perfectly. Let's set $\epsilon = 0$ and $\alpha = 1$. We thus obtain the following relations: 
$ \sigma(\pi_{GA,k}) \geq ( 1 - e^{-1 / \gamma} ) \sigma(\pi_{OA,k})$ and 
$\sigma(\pi_{GNA,k}) \geq ( 1 - \frac{1}{e} )^2 \sigma(\pi_{OA,k})$. These two factors are almost equal (in fact non-adaptive is slightly better) and in the case of perfect marginal estimation, there is not much gain in going adaptive. This intuition is confirmed by our experiments in section~\ref{sec:adaptiveexperiments}. 

\subsection{\mintss} 
\label{sec:mintss} 
Given that it takes the optimal adaptive policy $n_{OA}$ seeds to achieve a spread of $Q$, we seek to find the number of seeds that it will take the greedy adaptive and traditional greedy non-adaptive policy to achieve the same spread. Since the non-adaptive policy can be guaranteed to achieve the target spread only in expectation, we allow it to have a small shortfall $\beta_{ONA}$. In addition, we allow both the greedy policies to have a small shortfall against their optimal variants. We formalize these notions in the following theorem.

\begin{theorem}
Let the target spread to be achieved by the optimal adaptive policy be $Q$. Let the allowable shortfall for the optimal non-adaptive policy over the optimal adaptive policy be $\beta_{ONA}$. Let $\beta_{GA}$ and $\beta_{GNA}$ be the shortfall for the greedy adaptive and non-adaptive policies over their optimal variants. Let the number of seeds required by the four policies - OA, ONA, GA and GNA be $n_{OA}$, $n_{ONA}$, $n_{GA}$ and $n_{GNA}$. Then we have the following relations
\begin{equation}
n_{GA} \leq n_{OA}(\alpha \gamma \ln( Q / \beta_{GA} ))
\label{eq:mintss-adaptive}
\end{equation}
\begin{flalign}
& n_{GNA} \leq n_{OA}\ln\bigg(\frac{Q}{\beta_{ONA} - Q \epsilon}\bigg) \ln\bigg(\frac{Q - \beta_{ONA}}{\beta_{GNA} - \epsilon(Q - \beta_{ONA})}\bigg) & \\
& n_{GNA} \leq n_{OA}\ln\bigg(\frac{Q}{\beta_{GA} - \beta_{GNA} - Q \epsilon}\bigg) \ln\bigg(\frac{Q - \beta_{GA} + \beta_{GNA}}{\beta_{GNA} - \epsilon(Q - \beta_{GA} + \beta_{GNA})}\bigg) &
\label{eq:mintss-non-adaptive}
\end{flalign}

\end{theorem}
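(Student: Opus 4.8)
The plan is to obtain each seed-count bound by \emph{inverting} the corresponding spread guarantee established for \im: each of those guarantees says that a given number of seeds achieves at least some factor times a reference spread, and for \mintss\ I instead hold the reference spread at the target $Q$ and solve for the number of seeds that suffices to clear the allowed shortfall. Throughout I treat $n_{OA}$ as fixed --- it is the number of seeds the optimal adaptive policy uses to reach spread $Q$, so that $\sigma(\pi_{OA,n_{OA}}) \ge Q$ --- and I note that since every multiplicative factor below is nonnegative, using $\sigma(\pi_{OA,n_{OA}}) \ge Q$ in place of equality only strengthens the conclusions.

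For the adaptive bound~\eqref{eq:mintss-adaptive}, I would start from Proposition~\ref{prop:p1} with $b_{GA}=b_{OA}=1$, which gives $\sigma(\pi_{GA,n_{GA}}) \ge (1 - e^{-n_{GA}/(\alpha\gamma n_{OA})})\,Q$. Requiring the right-hand side to be at least $Q-\beta_{GA}$ forces $e^{-n_{GA}/(\alpha\gamma n_{OA})} \le \beta_{GA}/Q$, i.e.\ $n_{GA} \ge \alpha\gamma\, n_{OA}\ln(Q/\beta_{GA})$. Hence once $n_{GA}$ reaches $\alpha\gamma\, n_{OA}\ln(Q/\beta_{GA})$ the greedy adaptive policy already meets the target with shortfall $\beta_{GA}$; since \mintss\ seeks the minimum, the seeds it actually needs are at most this value, which is exactly the claimed bound.

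For the non-adaptive bounds I would chain two inversions. First I invert the adaptivity-gap inequality~\eqref{eq:adaptivity-gap}: demanding $\sigma(\pi_{ONA,n_{ONA}}) \ge Q-\beta_{ONA}$ and solving gives $n_{ONA} \le n_{OA}\ln\!\big(Q/(\beta_{ONA}-Q\epsilon)\big)$, the first logarithmic factor. Next I invert the Nemhauser bound~\eqref{eq:nemhauser} relating GNA to ONA: writing $Q' := \sigma(\pi_{ONA,n_{ONA}}) \ge Q-\beta_{ONA}$ and requiring GNA to reach $Q'-\beta_{GNA}$ gives $n_{GNA} \le n_{ONA}\ln\!\big(Q'/(\beta_{GNA}-\epsilon Q')\big)$. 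Because the second factor is a positive constant not depending on $n_{ONA}$, I can substitute the first bound to obtain the first non-adaptive inequality. The second non-adaptive inequality is the same chain run with the intermediate shortfall $\beta_{ONA}$ replaced by $\beta_{GA}-\beta_{GNA}$: that choice makes the eventual GNA target equal to $(Q-\beta_{GA}+\beta_{GNA})-\beta_{GNA}=Q-\beta_{GA}$, precisely the spread attained by the greedy adaptive policy, so that the resulting comparison of $n_{GNA}$ against $n_{GA}$ measures the adaptivity gain at a common spread level.

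Most of the work is routine exponential/logarithm algebra, so I expect the main obstacle to be conceptual bookkeeping rather than any hard step: I must keep the shortfall semantics consistent across the two-stage chain (the shortfall of ONA is measured against OA, that of GNA against ONA) and verify that the intermediate targets telescope, especially in the second inequality. I also need to record the implicit feasibility conditions $\beta_{ONA} > Q\epsilon$ and $\beta_{GNA} > \epsilon(Q-\beta_{ONA})$ (and the analogues with $\beta_{GA}-\beta_{GNA}$ in place of $\beta_{ONA}$) under which the logarithms are defined and positive; outside this regime the greedy policies cannot be guaranteed to clear the target and the bounds become vacuous.
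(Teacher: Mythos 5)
Your proposal matches the paper's proof essentially step for step: you invert Proposition~\ref{prop:p1} (with $b_{GA}=b_{OA}=1$) for the adaptive bound, invert the adaptivity-gap and Nemhauser inequalities and chain the two resulting seed-count bounds for the non-adaptive case, and substitute $\beta_{ONA}=\beta_{GA}-\beta_{GNA}$ to equalize the GA and GNA spread levels and obtain the second non-adaptive inequality. If anything, your explicit derivation of the first factor as $\ln\big(Q/(\beta_{ONA}-Q\epsilon)\big)$ is more faithful to the theorem statement than the paper's own intermediate relation (which drops the $\epsilon$ term), and your noting of the feasibility conditions under which the logarithms are defined is a useful addition the paper omits.
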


\begin{proof}
If in proposition~\ref{prop:p1}, we set $b_{GA}$ = $b_{OA}$ = 1 , $\sigma(\pi_{OA,n_{OA}}) = Q$ and 
$\sigma(\pi_{GA,n_{GA}}) = Q - \beta_{GA}$, after some algebraic manipulation we can obtain equation~\ref{eq:mintss-adaptive} of the theorem. 
Setting $\sigma(\pi_{ONA,n_{ONA}}) = Q - \beta_{ONA}$, $\sigma(\pi_{OA,n_{OA}}) = Q$ in equation~\ref{eq:adaptivity-gap}, we obtain the intermediate relation~\ref{eq:intermediate}.
\begin{equation}
n_{ONA} \leq n_{OA} \ln(\frac{Q}{Q - \beta_{ONA}})
\label{eq:intermediate}
\end{equation}
Setting $\sigma(\pi_{ONA,n_{ONA}}) = Q - \beta_{ONA}$ and $\sigma(\pi_{GNA,n_{GNA}}) = Q - \beta_{ONA} - \beta_{GNA}$, we obtain the following relation. 
\begin{equation}
n_{GNA} \leq n_{ONA} \ln(\frac{Q - \beta_{ONA}}{\beta_{GNA} - \epsilon(Q - \beta_{ONA})})
\label{eq:intermediate-2}
\end{equation}
We constrain the spreads for the greedy adaptive and greedy non-adaptive policies to be the same. Hence, 
$Q - \beta_{GA} = Q - \beta_{ONA} - \beta_{GNA}$. Hence $\beta_{ONA} = \beta_{GA} - \beta_{GNA}$. By combining equations~\ref{eq:intermediate} and~\ref{eq:intermediate-2} and substituting $\beta_{ONA}$ as $\beta_{GA} - \beta_{GNA}$, we obtain equation~\ref{eq:mintss-non-adaptive} of the theorem. 
\end{proof}

\textbf{Discussion:} To understand the implications of this theorem, set $\alpha = 1$, $\epsilon = 0$. Let the $\beta_{GNA} = 2$ and $\beta_{GA} = 1$, thus allowing for a shortfall of only $2$ nodes in the spread. We obtain the following relations: $n_{GA} \leq n_{OA} \gamma \ln(Q / 2)$ and $n_{GNA} \leq n_{OA} \ln(Q) \ln((Q - 1) / 2)$. 

\begin{figure}
\includegraphics[width = \columnwidth]{./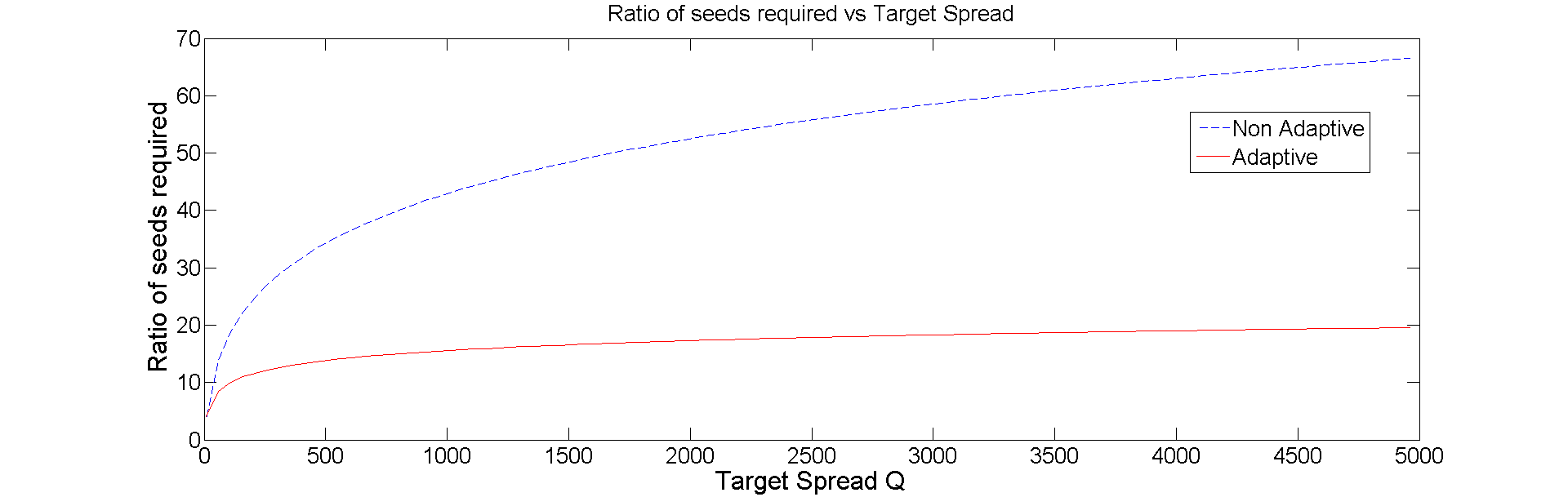}
\caption{Theoretical comparison of adaptive and non-adaptive strategies}
\label{fig:mintss-comparison}
\end{figure}
Figure~\ref{fig:mintss-comparison} shows the growth of these functions with $Q$. We can see that as $Q$ increases, the ratio $\frac{n_{GA}}{n_{OA}}$ grows much slower than $\frac{n_{GNA}}{n_{OA}}$. Hence, for the \mintss problem, there is clearly an advantage on going adaptive. This is confirmed by our experiments in section~\ref{sec:adaptiveexperiments}. 

\subsection{Bounded Time Horizon} 
\label{sc:bounded} 
In discrete diffusion models (e.g., IC), each time-step represents one hop in the graph, so the time needed for a diffusion to complete is bounded by $D$, the longest simple path in the network. 
In networks where this length is small~\cite{newman2003structure}, most diffusions complete within a short time. This is also helped by the fact that in practice, influence probabilities are small. However, if we are given a very short time horizon, the diffusion process may not complete. In this case, seed selection is forced to be based on observations of incomplete diffusions. We show that the spread function in this case is no longer adaptive submodular. 

\begin{theorem}
The spread function with the IC diffusion model is not adaptive submodular if the diffusion process after each intervention is not allowed to complete. 
\label{thm:counter}
\end{theorem}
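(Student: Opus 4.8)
The statement is a negative (non-)property, so the natural strategy is a \emph{counterexample}: adaptive submodularity requires that the marginal gain $\Delta(v\mid\psi)$ of every node be non-increasing as the partial realization grows (i.e.\ $\Delta(v\mid\psi)\ge\Delta(v\mid\psi')$ whenever $\psi$ is a subrealization of $\psi'$), so it suffices to build one small instance with bounded horizon, two nested partial realizations $\psi\subseteq\psi'$, and a single node $v$ for which the marginal gain strictly \emph{increases}, $\Delta(v\mid\psi)<\Delta(v\mid\psi')$. The plan is to construct a gadget in which an uncertain ``competitor'' path is revealed to have failed only in the larger realization $\psi'$, which raises $v$'s value, while the bounded horizon is tuned so that the later seeding of $v$ in $\psi'$ still reaches its target in time.

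Concretely, I would take a seed $s_0$ activated at $t=0$, an uncertain edge $s_0\to c$ that is live with probability $q\in(0,1)$, a deterministic edge $c\to w$, and the candidate node $v$ with a single deterministic out-edge $v\to w$; set the horizon $T=2$. Let $\psi$ be the state right after seeding $s_0$ (no diffusion observed yet) and let $\psi'$ be the state one step later in which the diffusion was run for a single step without completing and $c$ was observed to be inactive (so, under node level feedback, $s_0\to c$ is known to be dead). Since $\psi'$ merely observes strictly more than $\psi$ and agrees with it on $s_0$, the pair satisfies $\psi\subseteq\psi'$, and $v\notin\mathrm{dom}(\psi')$, so this is a legitimate instance of the definition.

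The computation then isolates the contribution of $w$ (the activation of $v$ itself is identical in both cases and cancels). Under $\psi$, seeding $v$ at $t=0$ activates $w$ at $t=1$, but with probability $q$ the competitor $s_0\to c\to w$ would have activated $w$ at $t=2\le T$ anyway; hence the expected marginal gain attributable to $w$ is $1-q$. Under $\psi'$, seeding $v$ at $t=1$ still activates $w$ at $t=2\le T$, and the competitor is now known dead, so $w$ is not covered otherwise and the marginal gain is $1$. Thus $\Delta(v\mid\psi')=1>1-q=\Delta(v\mid\psi)$ for any $q\in(0,1)$, contradicting adaptive submodularity.

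The step to get exactly right—and the conceptual heart of the argument—is explaining why this hinges on the diffusion \emph{not} completing, so that it does not contradict Theorem~\ref{th:th1}. The delicate bookkeeping is the timing: I must verify that advancing from $\psi$ to $\psi'$ costs $v$ one time step of reach yet $v$ still meets the deadline $T$, while the competitor's fate remains \emph{unresolved} at $\psi$ precisely because we intervene before the cascade from $s_0$ finishes. In the unbounded regime of Theorem~\ref{th:th1} we wait for each diffusion to complete, so by the time $v$ is considered we already know whether $w$ arrived via the competitor; the uncertainty that drives the increase in marginal gain simply cannot arise, which is exactly why adaptive submodularity holds there and fails here. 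I would therefore present the contrast with the completed-diffusion case as the crux, rather than the (routine) arithmetic of the two marginal gains.
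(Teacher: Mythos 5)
Your proposal is correct and follows essentially the same strategy as the paper's proof: a small counterexample in which a competing path to a target node is unresolved in the smaller realization but revealed dead in the larger one, so a candidate seed's marginal gain strictly increases (the paper uses the three-node path $u\to v\to w$ with uncertain edge $(v,w)$, candidate node $w$, and horizon $H=2$, getting gains $p$ versus $1$). The one caveat is that your two states share the domain $\{s_0\}$ and differ only in elapsed time, whereas the paper compares genuinely nested seed sets $\{u\}\subseteq\{u,v\}$; to match the Golovin--Krause definition of adaptive submodularity literally you should grow the seeded set between the two realizations, which your gadget accommodates without changing the arithmetic.
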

\begin{proof}

\begin{figure}[ht]
\centering
\includegraphics[scale=0.3]{./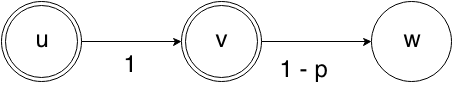}
\caption{Counterexample to show that the spread is not adaptive submodular under incomplete diffusion}
\label{fig:partial-diffusion-AS}
\end{figure}
We give a counterexample. Consider the network shown in Figure ~\ref{fig:partial-diffusion-AS} and  the true world, where the edge $(u,v)$ is live and  $(v,w)$ is dead. Let $H = 2$, $k = 2$.  Suppose at $t=0$, we choose the seed set $S = \{u\}$, so  the next intervention must be made at time $t = 1$. Based on the true world, we observe that nodes $u$ and $v$ are active at time $t = 1$. Hence we infer the edge $(u,v)$ to be live. We do not know the status of edge $(v,w)$. Even though $w$ is reachable in the network $G$, there is incomplete information in the realization revealed at $t=1$ to decide if the node $w$ is active or not, since the observed diffusion is incomplete. Thus, the expected spreads w.r.t. the realization above are as follows: $\sigma(S) = 2 + (1-p)$ and $\sigma(S \cup \{w\} ) = 3$. Let $S' = \{u,v\}$. Then $\sigma(S') = 2$ and $\sigma(S' \cup \{w\}) = 3$. This is because $w$ is one hop away from $v\in S'$ and the realization tells us that $w$ is not active. Thus, we have $\sigma(S\cup\{w\}) - \sigma(S) < \sigma(S'\cup\{w\}) - \sigma(S')$. This was to be shown. \eat{Since adaptive submodularity requires that the marginal gains should be diminishing in every realization, this example shows the spread function is not adaptive submodular in general, when the observed diffusion is forced to be incomplete. It can also be shown that we lose the node level and edge level feedback equivalence if the diffusion is not allowed to complete.} 
\end{proof}

What are our options, given that the spread under bounded time horizon is in general not adaptive submodular? Theorem 24 in ~\cite{golovin2011adaptive} shows that if a function is not adaptive submodular, no polynomial algorithm can approximate the optimal expected spread within any reasonable factor. Thus, we may continue to use adaptive greedy policy, but without any guarantees in general. In our experiments~\ref{sec:adaptiveexperiments}, we use a novel Sequential Model Based Optimization (SMBO) approach for finding a reasonably good policy when the time horizon is bounded. 

\section{Algorithms}
\label{sec:Algorithms}
To obtain a greedy adaptive policy, we need to repeatedly select nodes with the maximum marginal gain at every intervention. This implies that we need to run the greedy influence maximization algorithm to compute the marginal gain over the entire network multiple times. Fortunately, this can be done efficiently  by exploiting the recent work~\cite{Tang2014Influence} which describes a near-optimal and efficient greedy algorithm -- Two-phase Influence Maximization (TIM) for non-adaptive influence maximization. We first review TIM and describe the modifications we made to it for the adaptive case. 

\subsection{Two phase Influence Maximization}
\noindent 
{\bf Overview of TIM}: Given a budget of $k$ seeds, a network with $m$ edges and $n$ nodes and an appropriate diffusion model such as IC, TIM obtains a $(1 - 1/e - \epsilon)$ fraction of the optimal spread in the non-adaptive case, incurring a near-optimal runtime complexity of $\mathcal{O}(k+l)(n + m) log n / \epsilon^{2}$. TIM operates by generating a large number of random Reverse Reachable (RR) sets. An RR set is defined for a particular node $v$ and a possible world $W$ of the network. It consists of the set of nodes that can reach the node $v$ in the possible world $W$. Given enough number (see \cite{Tang2014Influence} for an explicit bound) of RR sets, the nodes which cover a large number of RR sets are chosen as the seed nodes: the node $u$ which appears in the maximum number of RR sets is chosen as the first seed. Once a node $u$ is selected as a seed, we remove all the RR sets containing $u$ and the next seed is the node which covers the maximum of the remaining RR sets and so on until a seed set $S$ with $k$ nodes is chosen. Tang et al.~\cite{Tang2014Influence} show that this simple strategy is enough to guarantee a $(1 - 1/e - \epsilon)$-approximation factor in near optimal time. 

\noindent 
{\bf Adaptive TIM}: 
In a greedy adaptive policy, we need to select seed nodes in every intervention. After each intervention, a certain number of nodes are influenced and become active. These already active nodes should not be selected as seeds. To ensure this, we eliminate all RR sets covered by any of these active nodes. If the number of nodes which became active is large, it brings the number of remaining RR sets below the required bound, which in turn can invalidate the theoretical guarantees of TIM, as the marginal gain of seeds selected in the next intervention may not be estimated accurately. Hence after each intervention, we need to re-generate the RR sets to effectively select seeds for the next intervention. \SV{To avoid this expensive repeated RR set generation, we instead eliminate all active nodes from the original network, by making all the incoming and outgoing edges have a zero probability, and generating the required number of RR sets for the new modified network.} This guarantees that the resulting RR sets do not contain the already active nodes. This is equivalent to running the greedy non-adaptive algorithm multiple times on  modified networks and \SV{results in retaining preserves the theoretical guarantees of TIM.} For the unbounded time horizon, the optimal policy consists of selecting one seed per intervention and letting the diffusion complete. For the IC model, the diffusion can take a maximum of $D$ time steps where $D$ is the lenght of the longest simple path in the network. 


\subsection{Sequential Model Based Optimization}
\label{subsec:bounded}
In the case of bounded time horizon (i.e.,  $T < kD$), as discussed at the end of Section~\ref{sc:bounded}, there is no straightforward strategy to find or approximate the optimal policy. The policy depends on the precise values of time horizon $T$ and properties of the network. For \im\, the two extreme cases are the non-adaptive policy  and the completely sequential greedy policy. The non-adaptive policy does not take any feedback into account and is hence suboptimal. For a sequential policy, the inter-intervention time $T/k$ will be less than $D$. Hence the completely sequential policy will result in incomplete diffusions and from ~\ref{thm:counter} will be suboptimal. A similar reasoning applies for \mintss\. For both problems, we are either forced to seed more than one node per intervention or wait for less than $D$ time-steps between interventions, or both. \SV{We split the problem of finding the optimal policy into two parts - finding the intervention times and the number of nodes to be seeded at each intervention and which nodes need to be seeded at each intervention. Using the logic in ~\ref{thm:counter}, we solve the latter problem by using the adaptive TIM algorithm described above. For the former problem, we resort to a heuristic approach since the expected spread function we need to optimize does not have any nice algebraic properties w.r.t. time. In order to find the best offline policy, we need to calculate $f_{avg}$ for each candidate policy. Calculating $f_{avg}$ across all the candidate possible worlds is expensive. Thus we need to maximize an expensive function without any nice mathematical properties. Hence we resort to a bayesian optimization technique known as sequential model based optimization (SMBO)\cite{hutter2011sequential}.} The SMBO approach narrows down on the promising configurations (in our case, policies) to optimize a certain function. It iterates between fitting models and using them to make choices about which configurations to investigate. 

We now show the above problems can be encoded for solving these problems using SMBO. Consider \im\. We have a maximum of $k$ interventions. Some of these interventions may seed multiple nodes whereas other might not seed any. There are another $k-1$ variables corresponding to the inter-intervention times. Since the number of variables is $2k - 1$, SMBO techniques will slow down as $k$ increases. It is also non-trivial to add the constraint that the sum of seeds across all interventions will add to $k$. Since this leads to an  unmanageable number of variables for large $k$, we introduce a parameter $p$ which we refer to as the policy complexity. Essentially, $p$ encodes the number of degrees of freedom a policy can have. For every $i < p$, we have a variable $s_{i}$ which is the number of nodes to be seeded at a particular intervention. We have also have a variable $t_{i}$ which encodes waiting time before making the next intervention. For example, if $p = 2$ and $s1 = 2, t1 = 5, s2 = 3, t2 = 7$ we initially seed 2 nodes, wait for 5 time-steps, then seed 3 nodes, wait for 7 time-steps before the next intervention. In the next intervention, we repeat the above procedure, until we run out of time, i.e.,  reach $T$ or get too close (within $s1$ or $s2$) to the budget of $k$ seeds. In the latter case, the last intervention just consists of using the remaining seeds. We use the same strategy to encode policies for \mintss\. In this case, however, we stop if the time reaches $T$ or if $\ge Q$ nodes become active. Since we have a manageable number of parameters, we can easily use SMBO techniques to optimize over these parameters. The objective function for the first problem is to maximize the spread. The constraint is covered by the encoding. For the second problem, the objective function is to minimize the seeds to achieve a spread of $Q$. This can be modelled by introducing penalty parameters $\lambda_{1}$ and $\lambda_{2}$. The function can be written as,
\begin{equation}
minimize\; g(x) + \lambda_{1}( Q - f(x) ) + \lambda_{2}( f(x) - Q ) 
\label{eq:SMAC-min-cost-eq}
\end{equation}
where $x$ is the parameter vector, $g(x)$ is the number of seeds, $f(x)$ is the spread, $Q$ is the target spread. The parameter $\lambda_{1}$ penalizes not achieving the target spread whereas $\lambda_{2}$ penalizes over-shooting the target spread. $\lambda_{1}$ encodes the hard constraint whereas $\lambda_{2}$ is used to direct the search. 

\section{Experiments}
\label{sec:Experiments}
\subsection{Datasets}
We run all our experiments on 3 real datasets -- the author collaboration network NetHEPT (15k nodes and 62k edges), the trust network Epinions (75k nodes and 500k edges) and Flixster. On NetHEPT and Epinions where real influence probabilities are not available, we set the probability of an edge into a node $v$ to $1 / in\-degree(v)$, following the popular approach ~\cite{chen2009efficient, wang2012scalable, chen2010scalable}. We use the Flixster network under the topic-aware independent cascade model of diffusion~\cite{barbieri2013topic} for which the authors learned the probabilities using Expectation Maximization. Their processed network has 29k nodes and 10 topics. We choose the topic which results in the maximum number of non-zero edge weights. The resulting sub-network of Flixster consists of 29k nodes and 200k edges. 

\subsection{Experimental Setup}
As mentioned earlier, we consider only the IC model of diffusion. We compare between greedy non-adaptive,  greedy sequential adaptive  and the batch-greedy adaptive policies. Since the actual true world is not known, we sample each edge in the network according to its influence probability and generate multiple  true worlds. Since we are interested in the performance of a policy on an average, we randomly generate generate $100$ true worlds and average our results across them. For either problem, the seeds selected by the non-adaptive policy is based on expected case calculations and remain the same irrespective of the true world. Only the performance of the policy is affected by the true possible world. Also note that for \mintss\, in some  true worlds the spread of the non-adaptive policy might be less than the target $Q$. The shortfall can be modelled by the factor $\beta$ introduced in Section~\ref{sec:Theory}. 

\subsection{Sequential Model Based Optimization}
We use Sequential Model-Based Optimization for General Algorithm Configuration (SMAC) ~\cite{hutter2011sequential}. SMAC is the state of the art tool used for automatic algorithm configuration for hard problems including SAT and Integer Linear Programming. Based on the performance of a configuration on certain kinds of benchmark instances characterized by  problem specific properties, SMAC creates a random forest model and uses it to choose promising candidate configurations to be evaluated. SMAC uses the model's predictive distribution to compute its expected positive improvement over the the incumbent (current  best solution).  This approach automatically trades exploitation for  exploration. SMAC can easily handle both numerical and categorical parameters. 

For our case, we need to optimize an expensive black-box function (as defined in the previous section) over $2p$ configurations where $p$ is the policy complexity. Because the function is hard to evaluate a simple brute-force grid search over the parameter space is not feasible. SMAC is implicitly able to leverage the structure present in the problem and come up with promising solutions to the problem. 

The benchmark instances consist of seeds for the random process generating 10 true worlds at a time. Hence, the evaluation of each configuration on each instance involves running the algorithm 10 times. We use a training set of $1000$ such instances and a separate test set of $50$ instances to evaluate the policies found by SMAC. We restrict the number of function evaluations SMAC can make to 500 and set the tuner timeout (the maximum time that can be spent by SMAC in building the random forest model and deciding which configuration to evaluate next) is set to 100 seconds.

\subsection{\im}
\subsubsection{Unbounded time horizon}
For \im\, we vary the number of seeds $k$ over $\{1,10,20,50,100\}$.  For the unbounded horizon, we compute the spread obtained using the greedy non-adaptive and the greedy adaptive sequential policies. We set $\epsilon = 0.1$. 

\begin{figure}[ht]
\centering
\includegraphics[scale=0.25]{./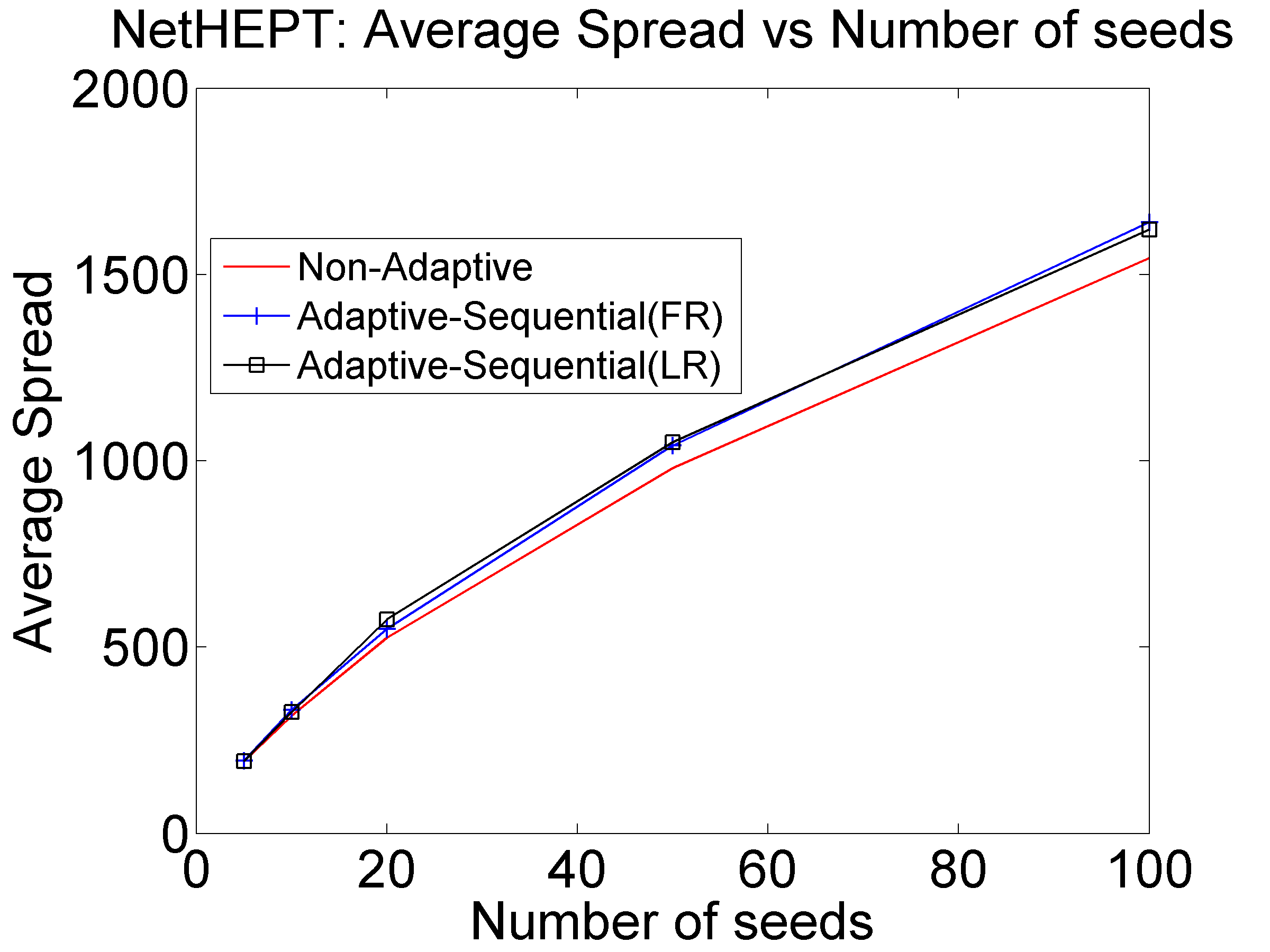}
\vspace{-0.3 cm}
\caption{NetHEPT: Average Spread vs Number of seeds}
\label{fig:NetHEPT-spread}
\end{figure}

\begin{figure}[ht]
\centering
\includegraphics[scale=0.25]{./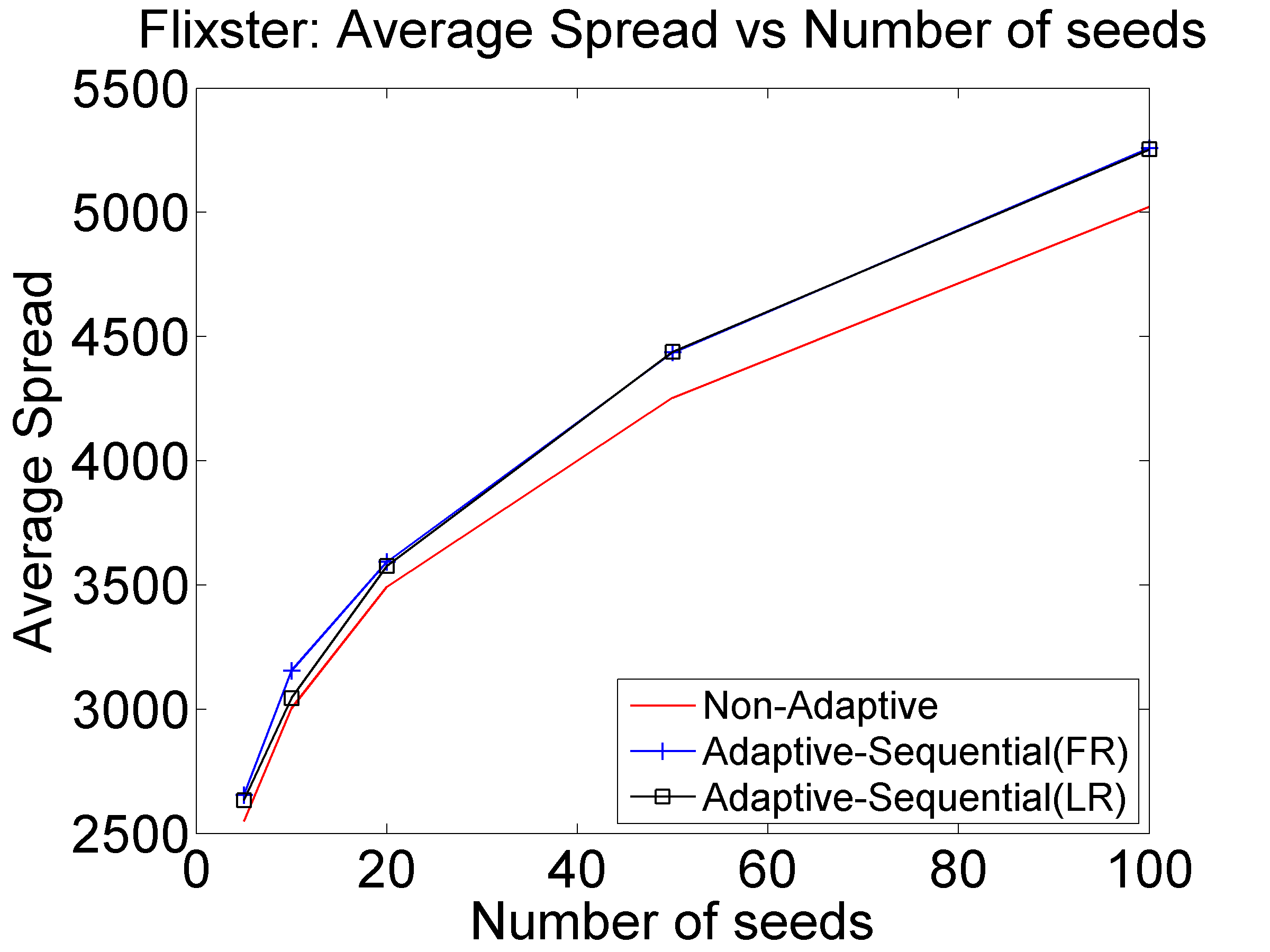}
\vspace{-0.3 cm}
\caption{Flixster: Average Spread vs Number of seeds}
\label{fig:Flixster-spread}
\end{figure}

Figures~\ref{fig:NetHEPT-spread} and~\ref{fig:Flixster-spread} show the average spread $f_{avg}$ across $100$ possible true worlds as the number of seeds is varied in the given range. We quantify the the effect of adaptivity by the ratio $\frac{f_{avg}(\pi_{GA})}{f_{avg}(\pi_{GNA})}$, which we call the average adaptivity gain. We see that the average adaptivity gain remains constant as the number of seeds are varied. We obtain similar results even with higher (100 to 500) values of $k$. This finding is consistent with the observations made in Section~\ref{sec:Theory}. 

\begin{figure}[ht]
\centering
\includegraphics[scale=0.25]{./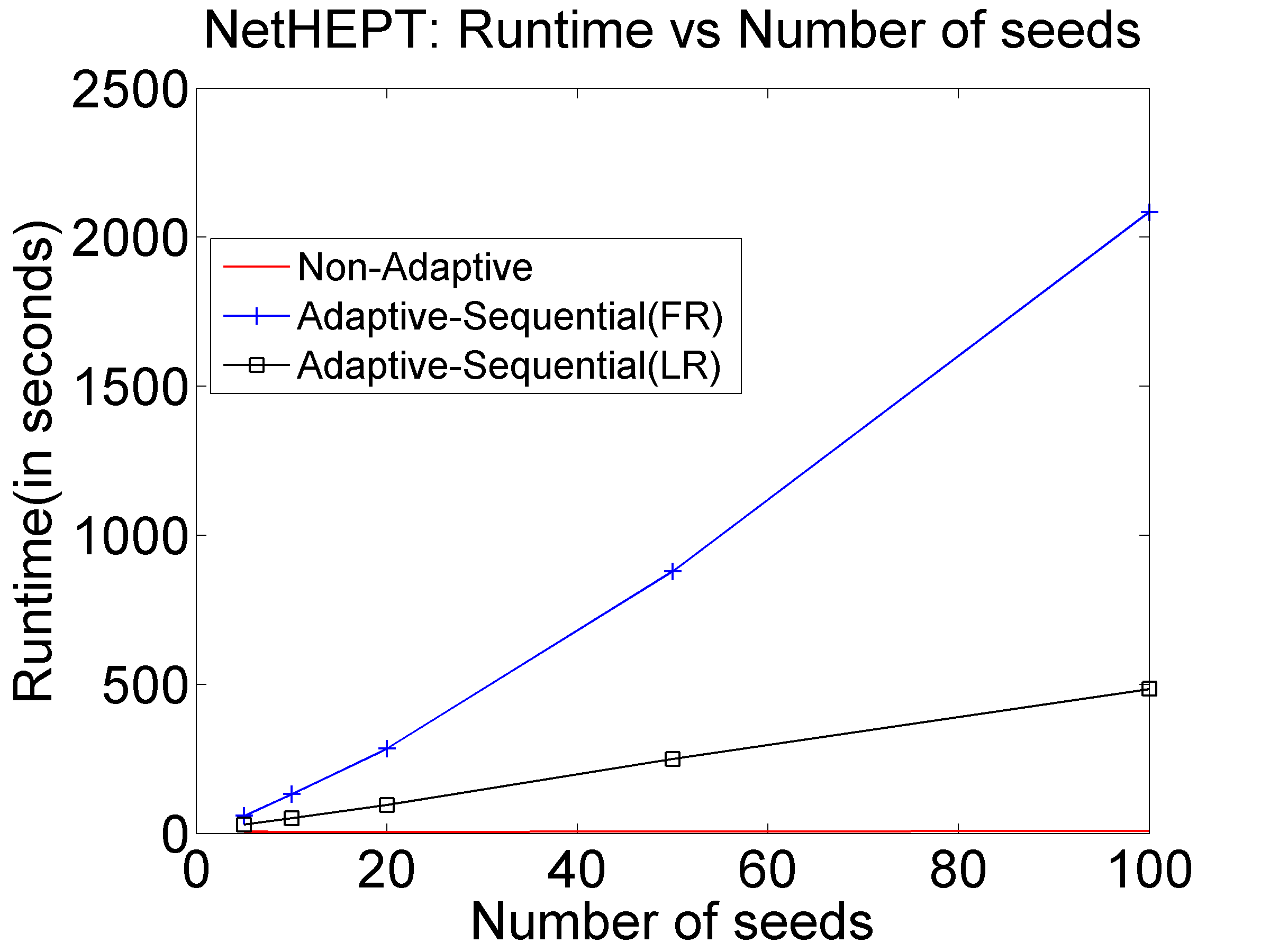}
\vspace{-0.3 cm}
\caption{NetHEPT: Runtime vs Number of seeds}
\label{fig:NetHEPT-spread-time}
\end{figure}

\begin{figure}
\centering
\includegraphics[scale=0.25]{./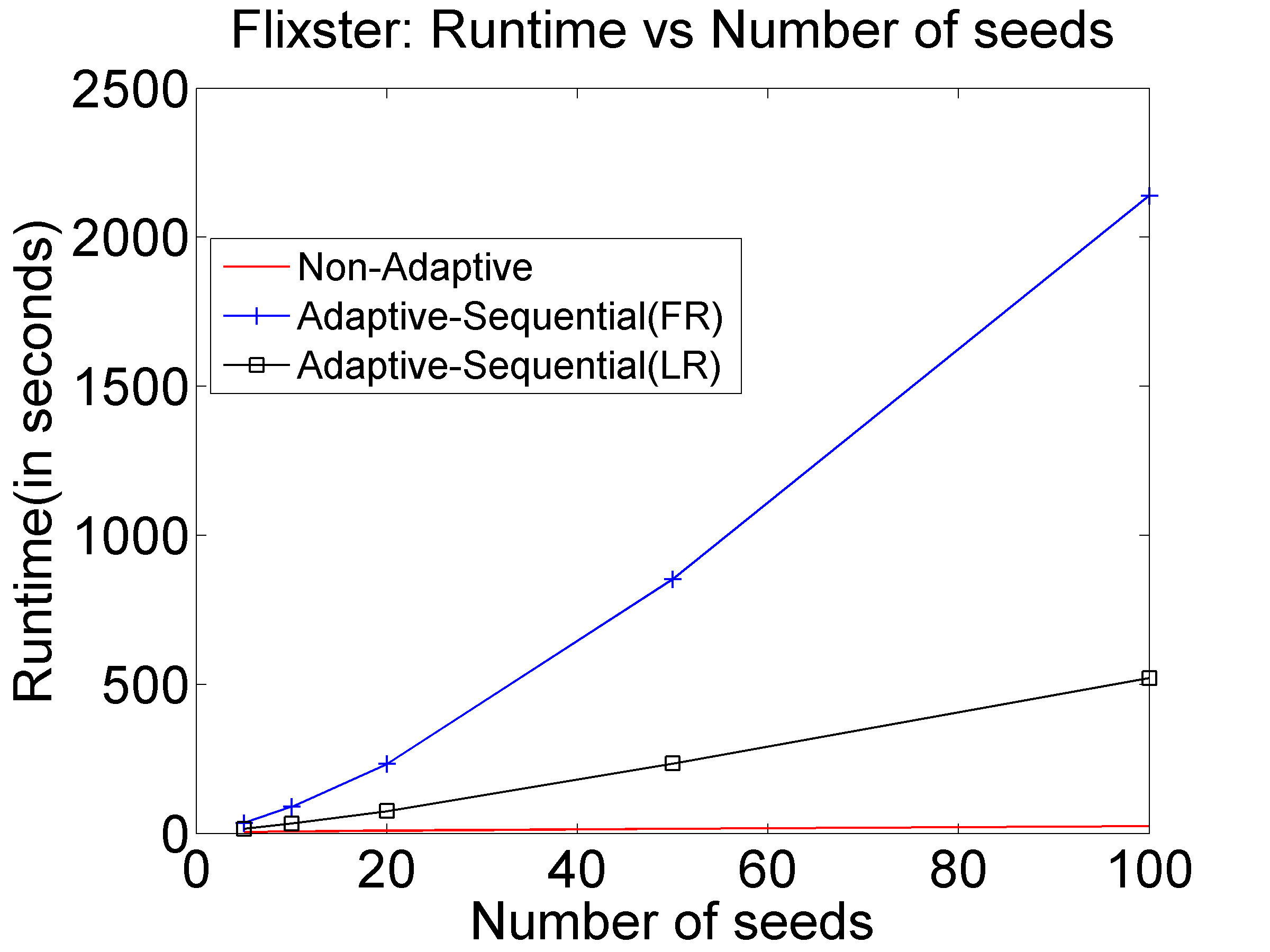}
\vspace{-0.3 cm}
\caption{Flixster: Runtime vs Number of seeds}
\label{fig:Flixster-spread-time}
\end{figure}

For the adaptive greedy sequential strategy in which we select one seed at a time, we generate RR sets for $k = 1$ and regenerate the RR sets between each pair of interventions. The run-time graphs are shown in Figures~\ref{fig:NetHEPT-spread-time} and~\ref{fig:Flixster-spread-time}. As can be seen, although this method scales linearly with the number of seeds, it is much slower than the non-adaptive case and will prohibitive for larger datasets. Instead we can generate a large number of RR sets upfront and use these sets to select seeds for the first few interventions. The RR sets are regenerated as soon as the change in the number of active nodes becomes greater than a certain threshold (the regeneration threshold $\theta$). The intuition is that if the number of active nodes has not increased much in a few interventions, the number of RR sets does not decline significantly and they still represent information about the state of the network well. We call this optimization trick lazy RR(LR) set regeneration to contrast it with the full RR(FR) set regeneration. We observe that because of submodularity, the frequency of RR set (re)generation decreases as the number of seeds (and time) increases. For our experiments, we empirically set $\theta$ equal to 10. Higher values of $\theta$ lead to lower runtimes but to a smaller average adaptivity gain. 

\begin{figure}[ht]
\centering
\includegraphics[scale=0.25]{./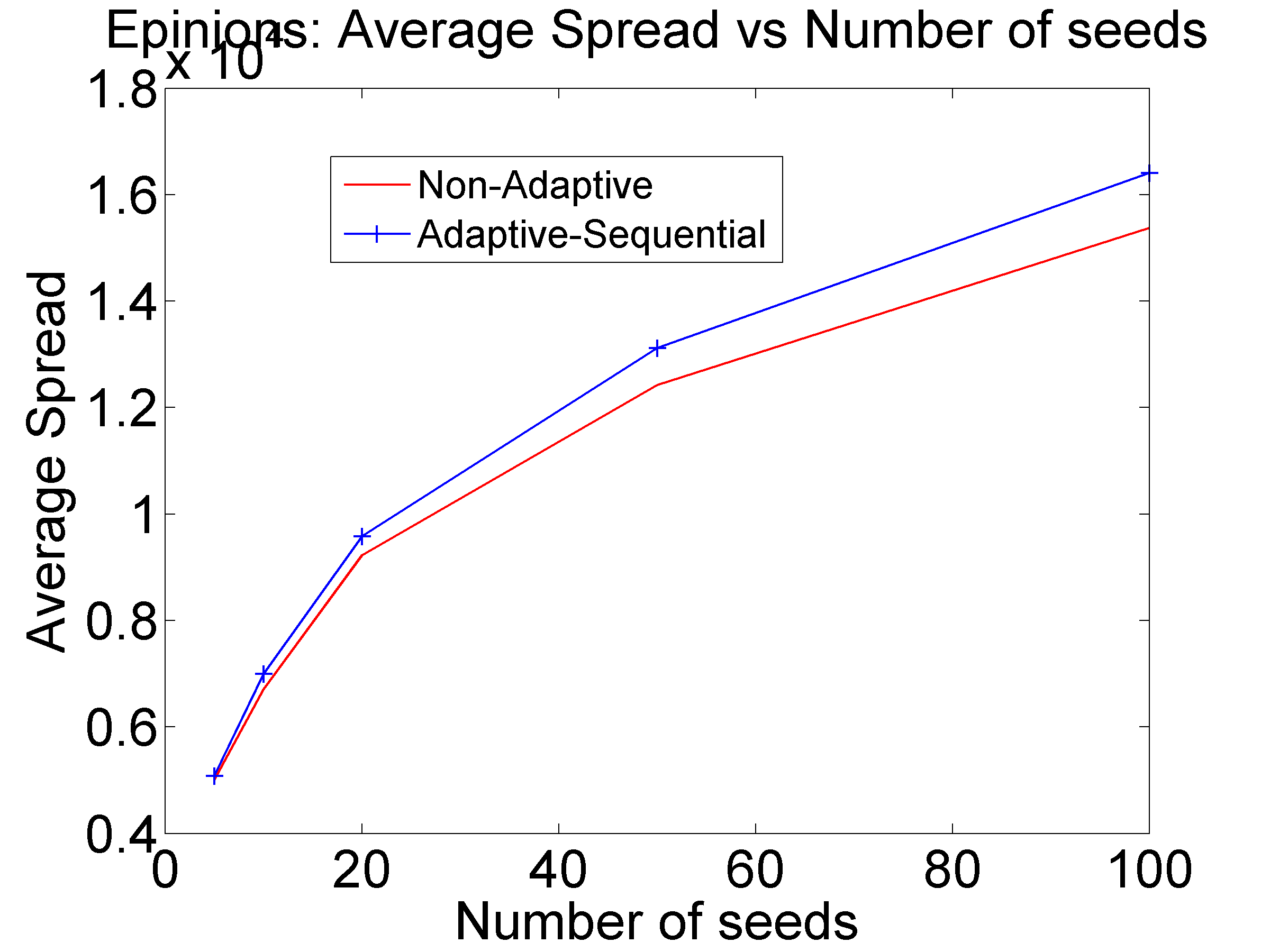}
\vspace{-0.3 cm}
\caption{Epinions: Average Spread vs Number of seeds}
\label{fig:Epinions-spread}
\end{figure}

\begin{figure}[ht]
\centering
\includegraphics[scale=0.25]{./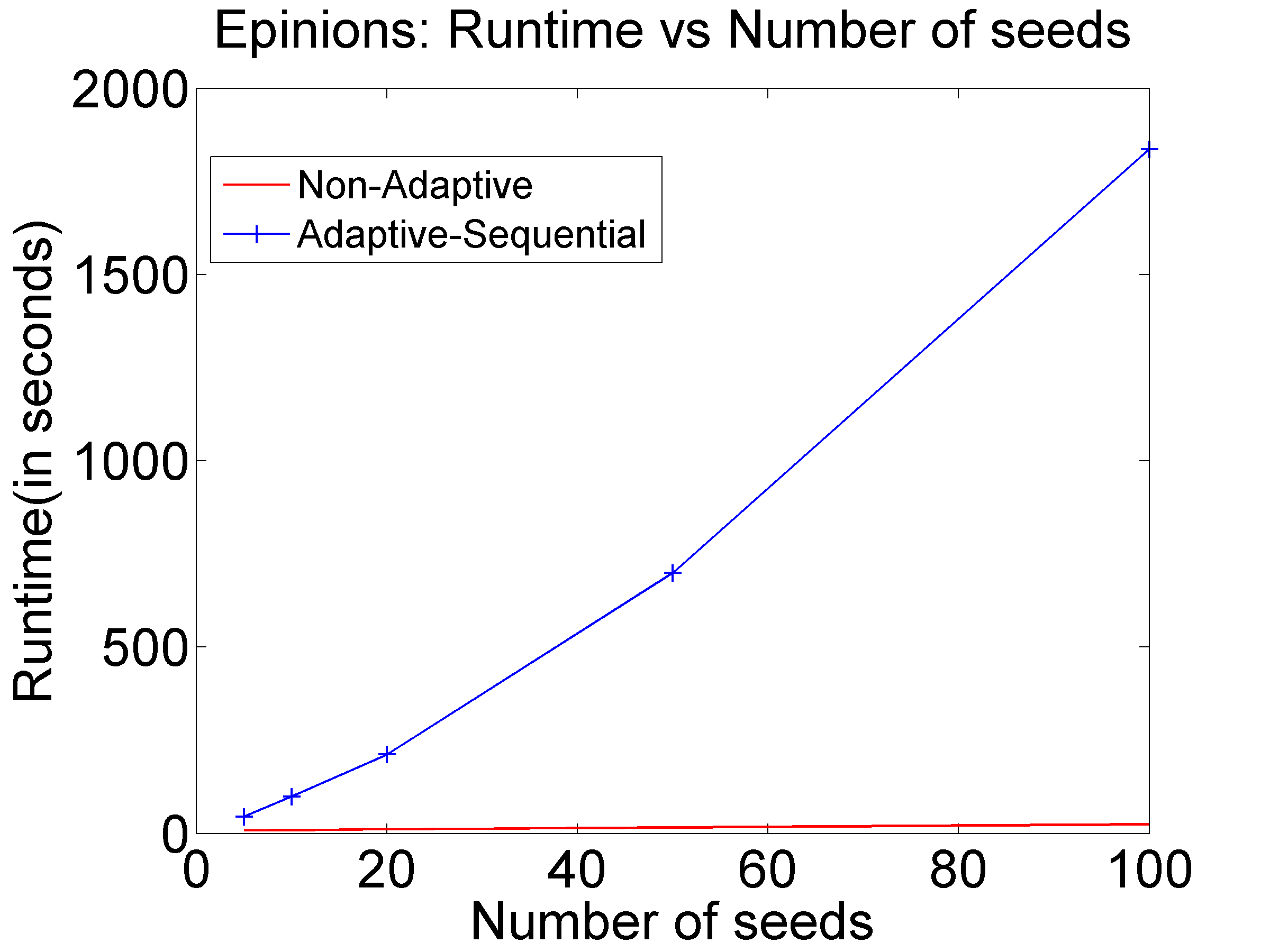}
\vspace{-0.3 cm}
\caption{Epinions: Runtime vs Number of seeds}
\label{fig:Epinions-spread-time}
\end{figure}
We use this strategy to find the spread for both NetHEPT and Flixster. As can be seen from the runtime graphs and average spread graphs, this strategy does not decrease the spread much but leads to significant computational savings. After verifying this strategy, we use it to compare the 2 policies on the larger Epinions dataset, where the same trend is observed -- see  Figures~\ref{fig:Epinions-spread} and~\ref{fig:Epinions-spread-time}. The average adaptivity gain is small even for the greedy adaptive sequential policy  in case of unbounded time horizon. 

\subsubsection{Bounded time horizon}
For the bounded time horizon, the policy will be forced to group sets of seeds together to form a batch. From Fact~\ref{fact:fact1}, we know that the average spread for such a policy will be lower and hence the average adaptivity gain will further decrease. To verify this, we conduct an experiment on the NetHEPT dataset in which we decrease the time horizon $T$ from a large value (corresponding to unbounded time horizon) to low values of the order of the length of the longest path in the network. We vary the policy complexity $p$ to be 1 or 2 in this case. We aim to find the best configuration by varying the batch-size in the range 1 to 100 and the inter-intervention time between 1 and the $D$ of the network. Since the difference between the spreads for the non-adaptive policy vs. the greedy adaptive sequential policy is so small, for the bounded time horizon, SMAC is unable to find a unique optimal policy. Different runs of SMAC yield different policies for the same number of seeds, sometimes converging to the non-adaptive policy even for reasonably large time horizons! A higher configuration time for SMAC might lead to stable results or alternatively we might need to encode the problem differently. We leave this for future work. 

\subsection{\mintss}
\subsubsection{Unbounded time horizon}
For all 3 datasets, for the unbounded time horizon, we compare the greedy non-adaptive and greedy adaptive policies with different batch sizes in the range $\{1,10,50,100\}$. Because a large number of seeds may be required to saturate a certain fraction of the network, we use the lazy RR set generation approximation explained above and set $\epsilon$ to $0.2$. Figures~\ref{fig:NetHEPT-cover},~\ref{fig:Flixster-cover} show the comparison between the non-adaptive and various adaptive greedy policies for the NetHEPT and Flixster datasets. Epinions shows a similar trend.

\begin{figure}[ht]
\centering
\includegraphics[scale=0.25]{./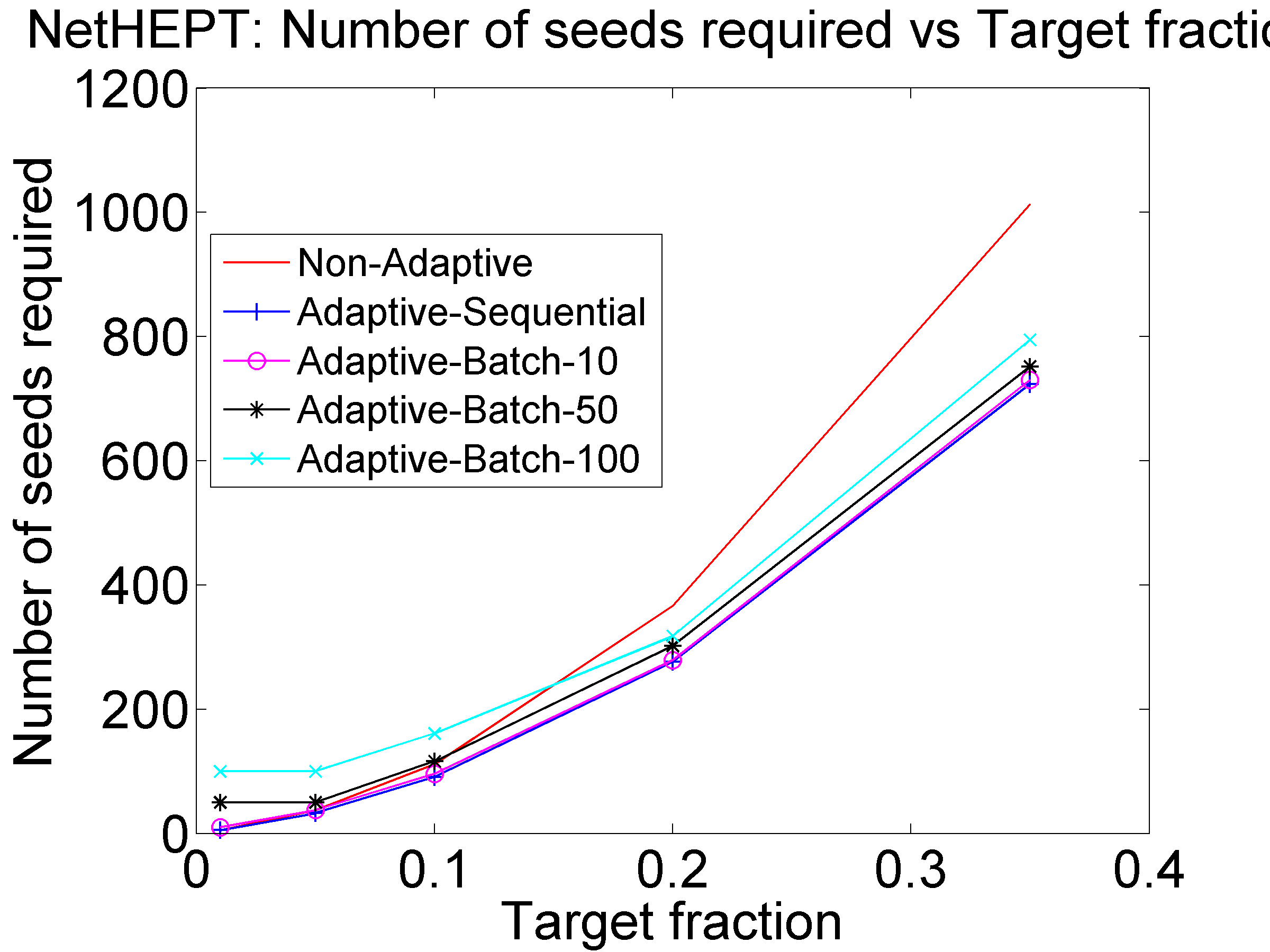}
\vspace{-0.3 cm}
\caption{NetHEPT: Number of seeds required vs Target fraction}
\label{fig:NetHEPT-cover}
\end{figure}

\begin{figure}[ht]
\centering
\includegraphics[scale=0.25]{./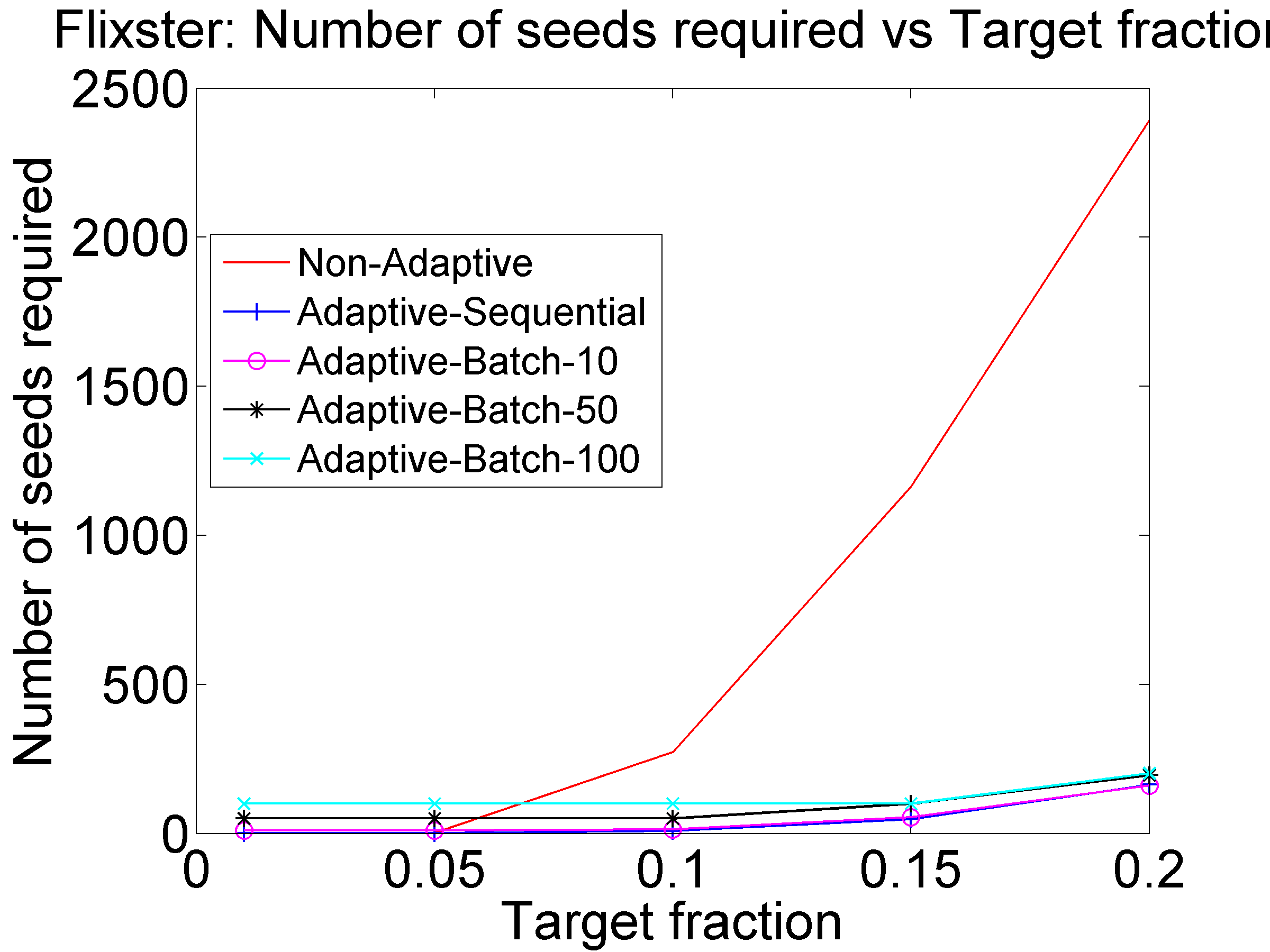}
\vspace{-0.3 cm}
\caption{Flixster: Number of seeds required vs Target fraction}
\label{fig:Flixster-cover}
\end{figure}
As can be seen, the non-adaptive policy is competitive for smaller number of target nodes. But as the target fraction increases, the adaptive policies are better able to exploit the market feedback mechanism and lead to large savings in the number of seeds. This again agrees with our theoretical results which showed that the adaptivity gain increases as the number of target nodes increases. As the size of the network increases, the estimated spread calculation in the non-adaptive case is averaged across greater number of true worlds and hence becomes less efficient. We observed that in many cases, the final true spread for the non-adaptive policy either overshoots the target spread or misses the target spread by a large amount. We conclude that adaptive policies are particularly useful if we want to influence a significant fraction of the network. 

\SV{We give some intuition for the difference in the adaptivity gains for the two problems. For adaptive policies, the rate of increase in the expected spread is fast in the beginning before the effects of submodularity take over. Hence adaptive policies require  fewer seeds than non-adaptive to reach a comparable target spread. However, once submodularity kicks in, the  additional seeds added  contribute relatively little to the spread. Hence for \mintss, where the objective is to reach a target spread with minimum seeds, the adaptivity gain is higher. However for \im, even though the adaptive policy reaches a high spread with fewer seeds, the remaining seeds in the budget don't add much more to the true spread. }


\begin{figure}[ht]
\centering
\includegraphics[scale=0.25]{./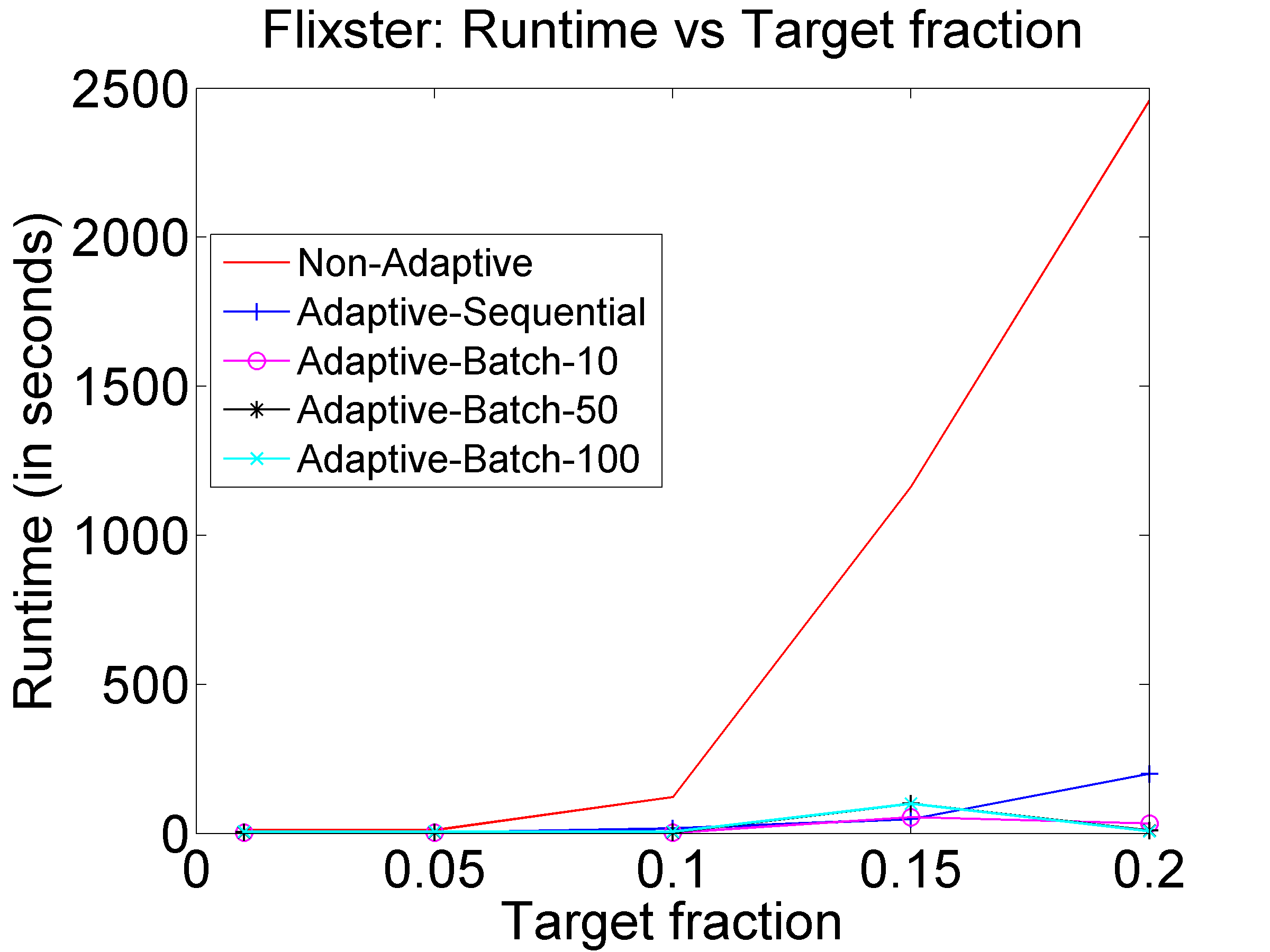}
\vspace{-0.3 cm}
\caption{Flixster: Runtime vs Target fraction}
\label{fig:Flixster-cover-time}
\end{figure}

We also plot the runtime graph for the Flixster dataset. The non-adaptive time dominates because it needs to choose a larger number of seeds. Since the batch-greedy policies select batches of seeds and consider feedback less often, they have a lower running time which decreases as the batch-size increases. Figure~\ref{fig:Flixster-cover-time} shows the runtime variation for Flixster. Results on other datasets show a similar trend. 


\subsubsection{Bounded time horizon}
\begin{table}[ht]
\begin{tabular}{ | l | c | r | r | r |}
\hline
T & 10 & 50 & 100 & 1000 \\
\hline
ShortFall ($\beta$) & 709 & 174.98 & 10.54 & 0 \\
\hline
Number of seeds & 200 & 177.33 & 171.11 & 168 \\ 
\hline
Objective function & 7290 & 1927.1 & 276.51 & 168 \\
\hline
Policy(s,t) & (100,6) & (28,8) & (20,11) & (3,12) \\
\hline
\end{tabular}
\caption{ Policies of $p = 1$ recovered by SMAC for varying time horizons($T$) for Flixster  with $Q$ = 5800 }
\label{tab:Flixster-smac-cover}
\end{table}

We now consider the important question, how good is the effect of adaptivity for a bounded time horizon for the \mintss\ problem. For this, we vary the time horizon $T$ from 10 to 1000 and the policy complexity $p$ is set to either 1 or 2. We use the Flixster dataset and fix the target fraction of nodes to 0.2. As in the previous problem, we aim to find the best configuration by varying the batch size in the range 1-100 and the inter-intervention time between 1 and the $D$ of the network. Since each configuration run involves solving \mintss\ 500 times, to save computation time we use a relatively high $\epsilon = 0.5$. We verified that similar results hold for smaller values of $\epsilon$. The optimal policy returned by SMAC is evaluated on a different set of instances (possible true worlds) averaging the results over 50 such instances. 

Table ~\ref{tab:Flixster-smac-cover} shows the results for this experiment. For both $p=1, 2$, as the time horizon increases, the shortfall goes to zero and the objective function is just the number of seeds required. We see that even for a low time horizon, SMAC is able to find a policy for which the number of seeds is close to the policy (which uses 163 seeds) for an unbounded time horizon. It is still much better than the non-adaptive version of the policy which uses a large number of seeds even for unbounded time horizon. As $T$ increases, in the policy found by SMAC, the number of seeds/interventions decreases and inter-intervention time increases. In fact, for  $T = 1000$ the $p =1$, the policy found by SMAC seeded 3 nodes per intervention and had a inter-intervention time equal to 12 (which is greater than $D$ of the graph). For extremely small $T$, the  policy found by SMAC had $100$ nodes per intervention and a very short inter-intervention time of $3$. We observe similar behaviour even for $p = 2$ and with the NetHEPT dataset as well. Note that as long as $T > D$, the non-adaptive version will require the same number of seeds it needs for the unbounded horizon case. This shows us the benefit of adaptivity even when the time horizon is severely constrained. These experiments show the effectiveness of SMAC in finding reasonably good policies for any time horizon for \mintss. 

\section{Conclusion}
\label{sec:Conclusion}
We studied adaptive influence maximization in social networks and focused on the \im\ and \mintss\ problems. We considered both the unbounded  and bounded time horizon. For unbounded horizon, we derived theoretical bounds on the performance of the greedy adaptive over greedy non-adaptive policies, thus quantifying the practical benefit of going adaptive. We studied how the adaptivity gain is affected by batch-size and number of seeds for \im\ and by target spread for \mintss. From our experiments on real networks, we conclude that while the benefit of going adaptive is modest for the \im\ problem,  adaptive policies lead to significant savings (i.e., gain) for the \mintss\ problem. For bounded time horizon, we argued that finding the optimal policy is hard and used sequential model based optimization (SMBO) techniques to find a good policy for both the problems.

Several interesting directions for future work remain. 
Extending our framework to the LT model and also to continuous time models is interesting. We believe that with continuous time models and the use of queries to infer the state of the network, adaptive influence maximization will bring the theory much closer to the practical needs of a real viral marketer.


\bibliographystyle{abbrv}
\bibliography{ref}  

\end{document}